\newtheorem{lemma}{Lemma}
\newtheorem{observation}{Observation}
\begin{document}
\begin{titlepage}
   \begin{center}
       \vspace*{1cm}
 		\large{
       \textbf{A New Paradigm for Identifying Reconciliation-Scenario Altering Mutations Conferring Environmental Adaptation\footnote{A conference version of this paper appeared in WABI-2019}}}

       \vspace{0.8cm}
              \vfill
	
       \large \textbf{Roni Zoller}\\
       \small{
       Computer Science Department, Ben Gurion University of the Negev, Israel, tel. +972-54-6472788, Email: ronizo@post.bgu.ac.il\\}
       \large \textbf{Meirav Zehavi*}\\
       \small{
       Computer Science Department, Ben Gurion University of the Negev, Israel, tel. +972-8-6428559, Email: meiravze@bgu.ac.il \\}
       \large \textbf{Michal Ziv-Ukelson*}\\
       \small{
       Computer Science Department, Ben Gurion University of the Negev, Israel, tel. +972-8-6428042, Email: michaluz@cs.bgu.ac.il\\}
       \vfill
 
       \vspace{0.8cm}
  
       *Corresponding authors.
 
   \end{center}
\end{titlepage}
\newpage

\begin{abstract}
An important goal in microbial computational genomics is to identify crucial events in the evolution of a gene that severely alter the duplication, loss and mobilization patterns of the gene within the genomes in which it disseminates. In this paper, we formalize this microbiological goal as a new pattern-matching problem in the domain of Gene tree and Species tree reconciliation, denoted  "Reconciliation-Scenario Altering Mutation (RSAM) Discovery". We propose an $O(m\cdot n\cdot k)$ time algorithm to solve this new problem, where $m$ and $n$ are the number of vertices of the input Gene tree and Species tree, respectively, and $k$ is a user-specified parameter that bounds from above the number of optimal solutions of interest. The algorithm first constructs a hypergraph representing the $k$ highest scoring reconciliation scenarios between the given Gene tree and Species tree, and then interrogates this hypergraph for subtrees matching a pre-specified RSAM Pattern. Our algorithm is optimal in the sense that the number of hypernodes in the hypergraph can be lower bounded by $\Omega(m\cdot n\cdot k)$.
We implement the new algorithm as a tool, called RSAM-finder, and demonstrate its application to -the identification of RSAMs in toxins and drug resistance elements across a dataset spanning hundreds of species.
\end{abstract}

\section{Introduction}\label{sec:introduction}

Prokaryotes can be found in the most diverse and severe ecological niches of the planet.  Adaptation of prokaryotes to new niches requires expanding their repertoire of protein families, via two evolutionary processes:  first, by selection of novel gene mutants carrying stable genetic alterations that confer adaptation, and second, by dissemination of an adaptively mutated gene. 
These two processes are correlated:  an adaptation-conferring mutation in a gene could accelerate its mobilization across bacterial lineages populating the corresponding environmental niche (\citet{poirel2009integron}),  
and vice-versa, the mobilization of a gene by transposable elements increases its chances to mutate or ``pick up" novel genomic context. 
Thus, an important research goal is to identify gene-level mutations that affect the spreading pattern of the mutated gene within and across the genomes harboring it.

For example, consider mutations conferring adaptation of bacteria to a human-pathogenesis environment. Here, a mutation to a resistance or virulence factor could enhance pathogenic adaptation, thus increasing the horizontal mobilization of the mutated gene within other human pathogens inhibiting this niche (\citet{poirel2009integron}).  In this case, we say that the mutation has a {\em causal association} with the observed dissemination pattern of the mutated gene (i.e.~the increased mobilization of the gene among pathogenic bacteria). Identifying such mutations could inform infectious disease monitoring and outbreak control, and assist in identifying~potential~drug~targets. 

\begin{figure}[t]
\begin{center}
\fbox{\includegraphics[scale=0.1]{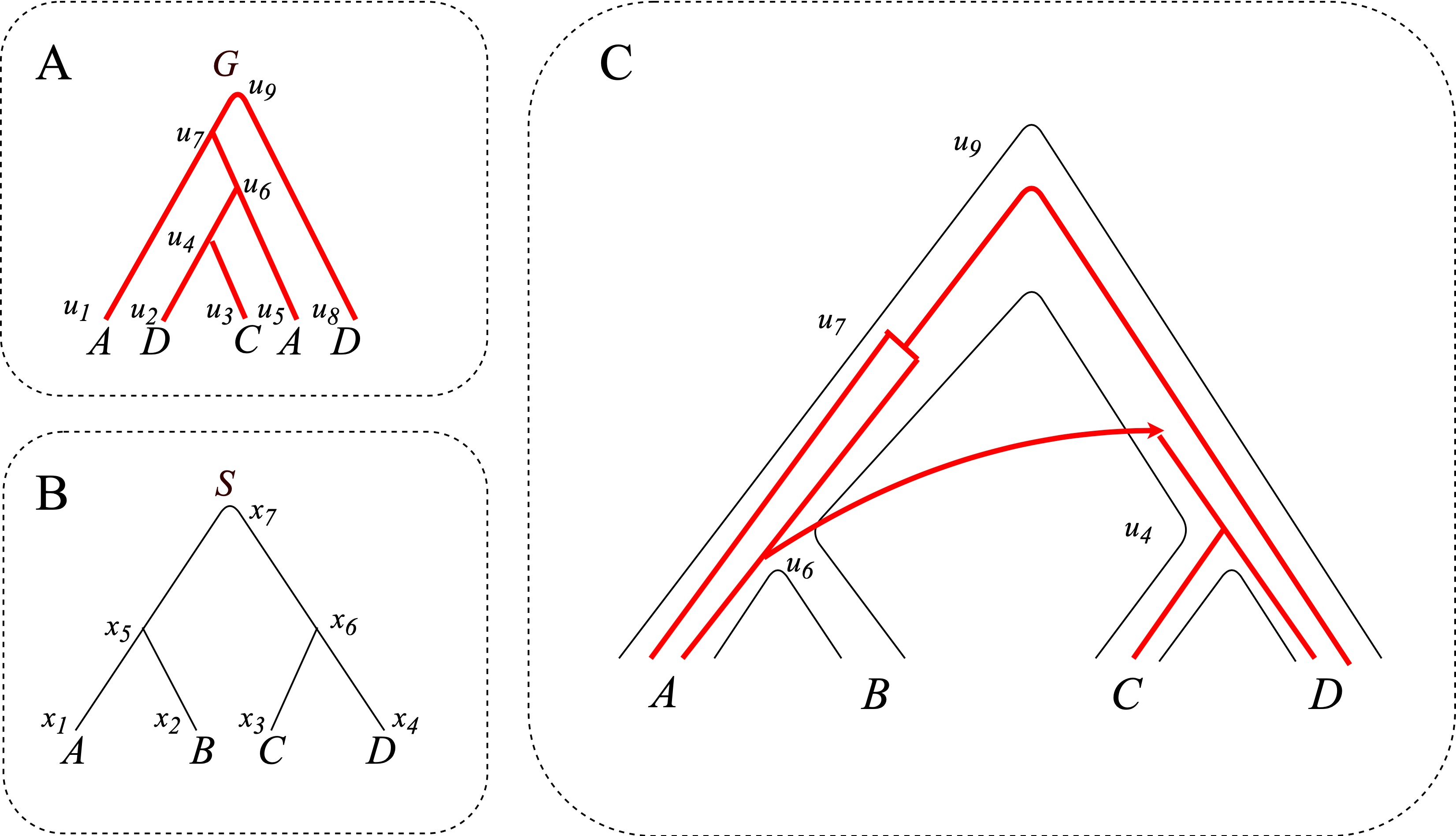}}
\end{center}
\caption{An example of a DLT scenario. (A) The Gene tree $G$. (B) The Species tree $S$. (C) A possible reconciliation scenario between $G$ and $S$.}
\label{fig:DLT}
\end{figure}

The co-evolution of genes and their host species is classically described by computing  the most parsimonious reconciliation scenario between a given Gene tree $G$ and the corresponding Species tree $S$, that is, a mapping of each vertex $u\in G$ to a vertex $x\in S$. Three major evolutionary processes, traditionally considered by reconciliation approaches, are horizontal gene transfer, gene duplication, and gene loss (\citet{tofigh2011simultaneous}). Each mapping of a vertex $u\in G$ to a vertex $x\in S$  is associated with one of these evolutionary events, and assigned a cost, accordingly (see Fig.~\ref{fig:DLT}). 
The optimization problem of computing a least-cost reconciliation between $G$ and $S$, where the total cost is computed as the sum of the costs assigned to each of the mappings, is denoted {\em Duplication-Transfer-Loss (DLT) Reconciliation}. (Previous works on this problem are reviewed in Section~\ref{sec:previous_works}~below.)

Motivated by examples such as the one given above, we formalize a new pattern-matching problem in the domain of DLT reconciliation. Given are a Gene tree $G$, a corresponding Species tree $S$,  a mapping $\sigma$ from the leaves of $G$ to the leaves of $S$, and (optional) an environmental annotation labeling the leaves of the input trees. Let $\mathcal{H}$ denote some data structure, to be defined later in the paper, that models the space of reconciliations between $G$ and $S$. 
A {\em DLT Reconciliation Scenario Pattern} denotes a mapping between a vertex $u\in G$ to a vertex $x\in S$, which obeys a set of user-defined specifications regarding the corresponding reconciliation event, the labels on the paired vertices, and other features associated with the mapping. Mappings between pairs of vertices ($u\in G$, $x\in S$) that abide by the requirements specified by $P$ are denoted {\em instances of $P$ in $\mathcal{H}$}. 
Given a pre-specified DLT Reconcilation Scenario pattern $P$ and a data structure $\mathcal{H}$ modeling the space of reconciliations between $G$ and $S$, a {\em Reconciliation Scenario Altering Mutation (RSAM) of $P$ in $\mathcal{H}$} is a vertex $v \in G$ representing a gene mutation with a putative causal association to instances of $P$ in $\mathcal{H}$. The {\em RSAM Discovery problem} is to identify RSAMs in $G$.

\begin{figure}[t]
\begin{center}
\fbox{\includegraphics[scale=0.115]{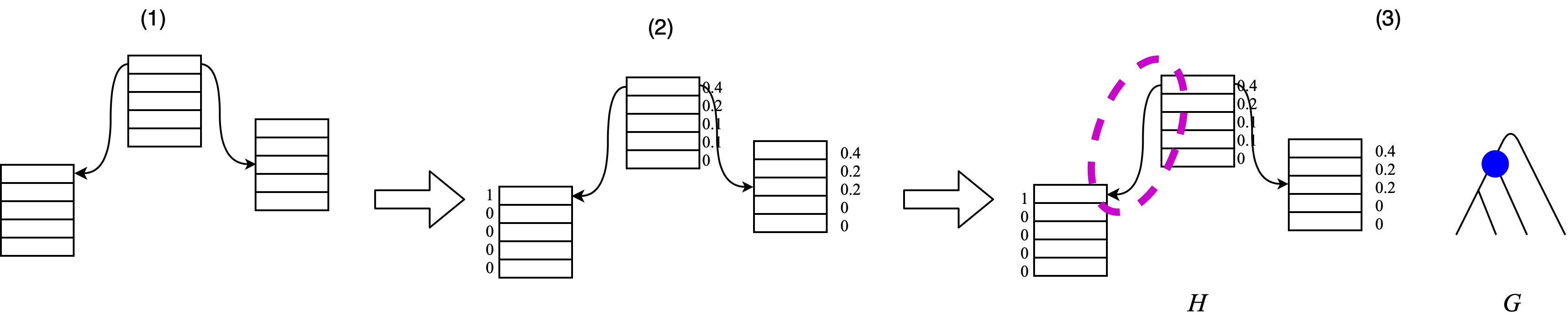}}
\end{center}
\caption{High-level overview of the RSAM-finder algorithm.}
\label{fig:flowchart}
\end{figure}

In what follows, we propose a three-stage solution to the RSAM Discovery problem defined above (illustrated in Fig.~\ref{fig:flowchart}).  The first stage constructs a hypergraph $\cal{H}$ that recursively aggregates all the $k$-best reconciliations of $G$ and $S$. Each supernode in $\cal{H}$ consists of $k$ hypernodes, where each hypernode represents a partial solution for the DLT-reconciliation problem. Our hypergraph-ensemble approach is based on a model proposed by \citet{patro2013predicting} for network evolution, where here we extend and adapt it to the DLT model. This hypergraph of $k$-best reconciliations, intended to provide some robustness to the noise typical of this data, will serve as the search-space for the pattern-matching stage.
The second stage of our proposed solution consists of assigning a probability to each partial solution, that is, to each hypernode of $\cal{H}$. 
Finally, in the third stage, instances of the sought RSAM-pattern $P$ are identified within $\cal{H}$, and  RSAM-ranking scores are assigned accordingly to the vertices of $G$. Based on these scores, vertices representing putative RSAMs are identified in $G$ and subjected to biological interpretation. 

%

The construction of $\cal{H}$, in the first stage, is the computational bottleneck of the RSAM-Discovery pipeline mentioned above. Here, we adapt the approach proposed by~\citet{bansal2012efficient} for the basic, 1-best variant of DLT reconciliation, extending it to an efficient $k$-best variant.  This yields an $O(m\cdot n\cdot k)$ time algorithm for the problem, where $m$ and $n$ are the number of vertices of the input Gene tree and Species tree, respectively, and $k$ is a user-specified parameter that bounds from above the number of optimal solutions of interest.  Our algorithm is optimal in the sense that the number of hypernodes in the hypergraph can be lower bounded by $\Omega(m\cdot n\cdot k)$.

We remark that a simpler, $O(mn (n+k)\log (n+ k))$ algorithm for hypergraph construction can be obtained by directly building upon the dynamic programming (DP) algorithm of~\citet{tofigh2011simultaneous} and employing the method of cube pruning by \citet{huang2005better} to handle lists of (partial) $k$-best solutions efficiently. Pseudocode for this "naive" algorithm can be found in \citet{supmat2019} Section 2. Just like~\citet{bansal2012efficient} shaved off one $n$ factor in the time complexity of the algorithm by~\citet{tofigh2011simultaneous}, so do we shave the $n$ factor in the term $(n+k)$ in the time complexity of the aforementioned naive algorithm. 
Surprisingly, we show that by relying on the improved DP algorithm, a further speed up is  achieved---namely, the usage of a queue becomes unnecessary and therefore the $\log(n+k)$ factor is {\em eliminated}. 

%


Our proposed solution to the problem defined in this paper is implemented as a tool called RSAM-finder, publicly available in \citet{zoller2019}. We assert the performance of RSAM-finder in large scale simulations, and exemplify its application to the identification of  RSAMs across a datasets spanning hundreds of species.

\subparagraph{Previous Related Works.} \label{sec:previous_works}

The DLT Reconciliation problem has been extensively studied. 
In particular, two main DLT variants have been considered: (1) the \textit{undated DLT-reconciliation}, where the species are undated, and (2) the \textit{fully-dated DLT-reconciliation}, where either each vertex in the Species (and Gene) tree is associated with an estimated date or the vertices of the Species  (and Gene) tree are associated with a total order, and any reconciliation must respect these dates or order (i.e.~an HT event can occur only between co-existing species).

In the acyclic version of these variants, there cannot exist two genes such that one is a descendant of the other, yet the descendant is mapped (in the Species tree $S$) to an ancestor of the other. \citet{tofigh2011simultaneous} showed that the acyclic undated version is NP-Hard. However, the acyclic dated version becomes polynomially solvable (\citet{libeskind2009computational}). 
 
\citet{tofigh2011simultaneous, tofigh2009using} and \citet{david2011rapid} studied a version of the undated (cyclic) problem that ignores losses and proposed an $O(mn^2)$ dynamic programming algorithm for it. They also gave a fixed-parameter tractable algorithm for enumerating all optimal solutions. The time complexity of the algorithm was improved to $O(mn)$ in \citet{tofigh2009using} (under a restricted model that ignores losses) and in \citet{bansal2012efficient} (which does not ignore losses).

It is well known that the biological data used as input to the DLT Reconciliation problem could be inaccurate, whether due to a sequencing problem, a problem in the reconstruction of $G$ or $S$ (\citet{bapteste2009prokaryotic}), or due to some other problem caused by noise. 
To overcome this problem, previous works try to examine more than one optimal solution, for example, see~\citet{donati2015eucalypt} and \citet{scornavacca2013representing}. A probabilistic method for exploring the space of optimal solutions was suggested in \citet{bansal2013reconciliation} and \citet{doyon2009space}, where the latter was improved in \citet{doyon2011efficient}. Additional studies considered a space of candidate co-optimal scenarios within special variants of the DLT problem, some of which employed special constraints to drive the search (\citet{stolzer2012inferring,to2015fast,merkle2010parameter,charleston1998jungles}). Although all of the previous works reviewed  in this paragraph compute a space of candidate reconciliation scenarios, none of these works considered the application of pattern matching on this space, as we do in this work.  

DLT Reconcilation variants where the reconcilation computation is guided by constraints derived from vertex-coloring information, were proposed in applications studying host-parasite co-evolution, such as \citet{berry2018geography}, where the vertex coloring (in both $G$ and $S$) represents the geographical area of residence. However, the applied  constraints were ``hard-wired'' to the specific problem addressed in that paper. In contrast, the approach proposed in this paper is more general, supporting a pattern-search that is guided by a user-defined pattern.  Our tool RSAM-finder provides the users with a query language able to express more robust patterns, according to the various applications where the pattern-search is to be employed.

\section{Preliminaries} \label{sec:definitions}

For a (binary) rooted tree $T$, let $L(T)$, $V(T)$, $I(T)$ and $E(T)$ denote the sets of leaves, vertices, internal vertices and edges, respectively, of $T$. Additionally, let $V(T)^\star$ denote the set of finite (ordered) vectors over $V(T)$, i.e.~$V(T)^\star=\{ (v_1,v_2,\dots, v_\ell)\mid v_i\in V(T)$ for all $i\in\{1,\ldots,\ell\},\ell\in \mathbb{N}\}$.  When $T$ is clear from context, let $V^\star=V(T)^\star$.
Throughout, we treat any (binary) rooted tree $T$ as a directed graph whose edges are directed from root to leaves.
Then, if $(u,v)\in E(T)$, we say that $v$ is a {\em child} of $u$, and $u$ is the {\em parent} of $v$. For $u,v\in V(T)$, the notation $v\leq _T u$ signifies that $v$ is a {\em descendant} of $u$ (alternatively, $u$ is an {\em ancestor} of $v$), i.e.~there is a directed path from $u$ to $v$ or $u=v$. We say that $v$ is a {\em proper descendent} (resp.~{\em proper ancestor}) of $u$ if $v\leq _T u$ (resp.~$v\geq _T u$) and $u\neq v$, denoted $<_T$ (resp.~$>_T$). When both $u\not \leq_T v$ and $v\not\leq _T u$, we say that $u$ and $v$ are {\em incomparable}. 
For any $u,v\in V(T)$, let $d_T(u,v)$ denote the number of edges in the (unique simple undirected) path between $u$ and $v$ in $T$. When $T$ is clear from context, we drop it from the notations $v\leq _T u$ and $d_T(u,v)$.
For any $u\in V(T)$, let $T_u$ denote the subtree of $T$ rooted in $u$ (then, $V(T_u)=\{v\in V(T)\mid v\leq u\}$). 

\subparagraph*{DLT Scenario.} A {\em DLT scenario} for two binary trees $G$ (the {\em Gene tree}) and $S$ (the {\em Species tree}) is a tuple $\left\langle  \sigma, \gamma, \Sigma, \Delta, \Theta, \Xi \right\rangle $ where $\sigma: L(G)\to L(S)$ is a mapping of the leaves of $G$ to the leaves of $S$, $\gamma:V(G)\to V(S)$ is a mapping of the vertices of $G$ to the vertices of $S$, and $(\Sigma, \Delta, \Theta)$ is a partition of $I(G)$ (the set of internal vertices of $G$) into three event classes: {\em Speciation} ($\Sigma$), {\em Duplication} ($\Delta$) and {\em Horizontal Transfer} ($\Theta$). The subset $\Xi \subseteq E(G)$ specifies which edges are involved in horizontal transfer events. Additionally, the following constraints should be satisfied.
\begin{enumerate}
	\item {\bf Consistency of $\sigma$ and $\gamma$.} For each leaf $u\in L(G),\ \gamma (u)=\sigma (u)$. {\em This constraint ensures that $\gamma$ respects $\sigma$---that is, each leaf of $G$ is mapped to the species where it is found.} \label{lst:line:DLT1}
	\item {\bf Consistency of $\gamma$ and ancestorship relations in $S$.} For each $u\in I(G)$ with children $v$ and $w$:
	\begin{enumerate}
		\item $\gamma(u)\not <_S \gamma (v)$ and $\gamma(u)\not <_S \gamma (w)$. {\em This constraint ensures that each of the two children (in $G$) of the gene $u$ is mapped by $\gamma$ to a species that is {\em not} a proper ancestor (in $S$) of the species to which the gene $u$ is mapped; thus, it can be either a descendant of $u$ or incomparable to $u$.} \label{lst:line:DLT2a}
		\item At least one of $\gamma(v)$ and $\gamma (w)$ is a descendant of $\gamma (u)$. {\em This constraint ensures that at least one of the two children (in $G$) of the gene $u$ is mapped by $\gamma$ to a species that is a descendant (in $S$) of the species to which the gene $u$ is mapped.} \label{lst:line:DLT2b}
	\end{enumerate}
	\item {\bf Identifying horizontal transfer edges.} For each edge $(u,v)\in E(G)$, it holds that $(u,v)\in \Xi$ if and only if $\gamma(u) \not\leq _S \gamma(v)$ and $\gamma(v) \not\leq _S \gamma(u)$. {\em This constraint identifies which edges are horizontal transfer edges---specifically, a horizontal transfer edge is an edge $(u,v)\in E(G)$ from a gene $u$ to a gene $v$ that are mapped to species $\gamma(u)$ and $\gamma(v)$ that are incomparable.} \label{lst:line:DLT3}
	\item {\bf Associating events with internal vertices.} For each $u\in I(G)$ with children $v, w$: \label{lst:line:DLT4}
	\begin{enumerate}
		\item {\bf Speciation.} $u\in \Sigma$ only if both {\em (i)} $\gamma (u)=\mathsf{lca} (\gamma (v), \gamma (w))$ and {\em (ii)} $\gamma(v)$ and $\gamma (w)$ are incomparable (i.e.~$\gamma(v) \not\leq _S \gamma(w)$ and $\gamma(w) \not\leq _S \gamma(v)$).
		\item {\bf Duplication.} $u\in \Delta$ only if $\gamma (u)\geq _S \mathsf{lca} (\gamma (v), \gamma (w))$.
		\item {\bf Horizontal transfer.} $u\in \Theta$ if and only if either {\em (i)} $(u,v)\in \Xi$ or {\em (ii)} $(u,w)\in \Xi$.
	\end{enumerate}
\end{enumerate}

Fig.~\ref{fig:DLT} demonstrates a DLT scenario. The species are written below the leaves of $S$. The (non-injective) mapping $\sigma: L(G)\rightarrow L(S)$ is implied by the labels of the leaves of $G$: $\sigma(u_1)=x_1;\ \sigma(u_2)=x_4;\ \sigma(u_3)=x_3;\ \sigma(u_5)=x_1;\ \sigma(u_8)=x_4$. In the DLT reconciliation of $G$, $S$ and $\sigma$ (Fig.~\ref{fig:DLT}.C), the tubes illustrate the edges of $S$, and each edge of $G$ is embedded inside the tube (edge of $S$) to which it is mapped by $\gamma$. Then, $\Sigma=\{ u_9,u_4\}$, $\Delta = \{ u_7\}$ and $\Theta=\{ u_6\}$. Moreover, $\Xi = \{ (u_6,u_4)\}$.

\subparagraph*{Losses.} Our definition of a loss event is based on the definition given by \citet{bansal2012efficient}.
Consider a Gene tree $G$, a Species tree $S$ and a corresponding DLT scenario $\alpha = \left\langle  \sigma, \gamma, \Sigma, \Delta, \Theta, \Xi  \right\rangle$. Let $u\in V(G)$ with children $v$ and $w$ (if they exist). Define $\mathsf{Loss}_\alpha (u)$ as the number of losses at $u$. Intuitively, the number of losses at a vertex $u$ is the number of ``skips" the gene made in the tree $S$ at the evolutionary event that $u$ represents. Formally,
\[\mathsf{Loss}_\alpha (u)=\begin{cases}
d_S (\gamma (u), \gamma (v))-1+d_S (\gamma (u), \gamma (w))-1 & u\in \Sigma\\
	d_S (\gamma (u), \gamma (v))+d_S (\gamma (u), \gamma (w))  & u\in \Delta \\
	d_S (\gamma (u), \gamma (v)) & (u,w)\in \Xi
\end{cases}
 \]
 
Recall that $d_S (u,v)$ is the distance between $u$ and $v$ in the tree Species $S$. The formula above determines that the number of losses in a vertex $u\in V(G)$ is based on the event that occurred in $u$. First, if $u\in \Sigma$ (i.e.~$u$ represents a speciation event), then the number of losses is the sum of the distances between $u$ and its two children in the Species tree (by the mapping $\gamma$) without counting the first step. If $u\in \Delta$ (i.e.~$u$ represents a duplication event), then the number of losses is the sum of the distances between $u$ and its two children in the Species tree.  If $(u,w)\in \Xi$ (i.e.~$u$ represents a horizontal transfer event that happened in the edge $(u,w)$), then the number of losses is the sum of the distances between $u$ and its other child (i.e.~$v$) in the Species tree.

\subparagraph*{Costs.} Let $c_\Sigma,\ c_\Delta,\ c_\Theta$ and $c_{\mathsf{loss}}$ denote the costs of a speciation event, a duplication event, a horizontal transfer event and a loss event, respectively.
Let $\mathsf{Loss}_\alpha = \sum _{u\in V(G)}\mathsf{Loss}_\alpha (u)$.
Let $|\Sigma|\cdot c_\Sigma+|\Delta|\cdot c_\Delta  + |\Theta|\cdot c_\Theta + \mathsf{Loss}_\alpha \cdot c_{\mathsf{loss}}$ be the reconciliation cost of $\alpha$. When seeking a ``best'' DLT scenario, the goal is to find one that minimizes this cost.

\section{Hypergraph of \texorpdfstring{$k$}{Lg}-Best Scenarios} \label{sec:hypergraph-def}
\begin{figure}
\begin{center}
   \fbox{\includegraphics[scale=0.08]{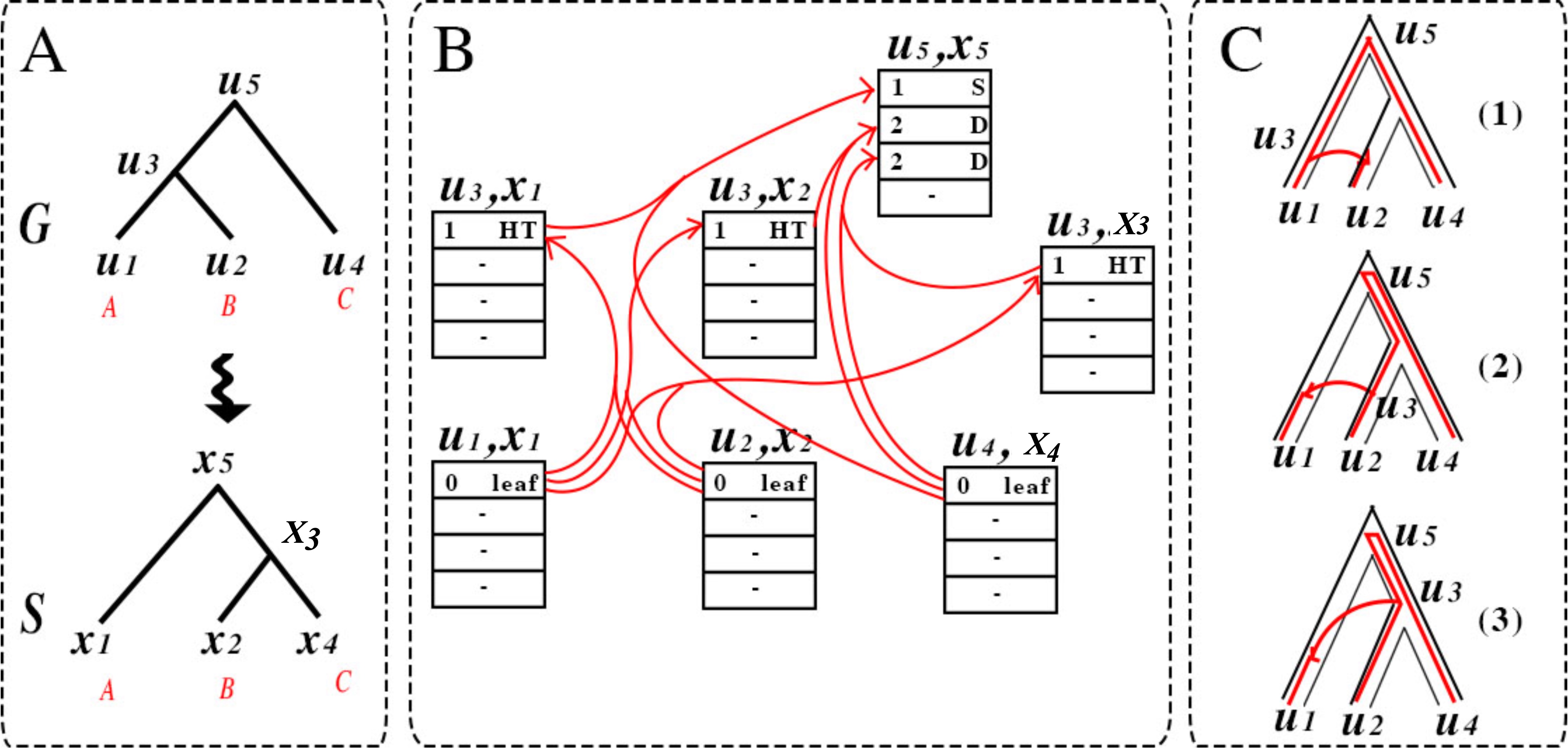}}
   \caption{Various aspects of the problem addressed in this paper. (A) The input trees $G$ and $S$. (B) An example of a hypergraph constructed based on the input trees and parameter $k$. (C) Three top-scoring DLT-reconciliations for the input.}
	\label{fig:hypergraph}
    \label{fig:fig1}
    \end{center}
\end{figure}

To represent {\em $k$-best solutions},\footnote{That is, $k$ DLT scenarios of the highest score(s), where ties (if any exist) are broken arbitrarily.} we use a directed hypergraph denoted by  $\mathcal {H}$ based on the notation in~\citet{huang2005better}. The hypergraph is a tuple $\mathcal{H}=\left\langle V,E\right\rangle$, where $V$ is a finite set of hypernodes, and $E$ is a finite set of (directed) hyperedges defined as follows. Each $e\in E$ is a pair $\left\langle T(e),h(e)\right\rangle $, where $h(e)\in V$ is the {\em head} of $e$, and $T(e)\in V^*$ (i.e.~$T(e)$ is a vector of vertices in $V$) is its {\em tail}. In our settings, $|T(e)|=2$ for every $e\in E$. In what follows, we define the hypernodes and hyperedges of $\mathcal{H}$ with respect to our problem. To exemplify this, we refer the reader to Fig.~\ref{fig:hypergraph}. In part B of this figure, the hypernode $(u_5,x_5,1)$ is annotated with score $1$ and event 'S'. To extract the best solution from the hypergraph, we begin with the first (i.e.~top-ranking) slot in the root of the hypergraph (which is $(u_5,x_5,1)$ in the figure), and then follow the incoming hyperedges in top-down order. In the figure, the best solution is solution (1) in part C of the figure. To extract it from the hypergraph in part B of the figure, we map $u_5$ to $x_5$ with a cost of 1 and a speciation event. Then, by first following the hyperedges incoming to $(u_5,x_5,1)$, we derive the mapping of $u_3$ to $x_1$, and of $u_4$ to $x_3$. Finally, by following the hyperedges incoming to $(u_3,x_1,1)$, we also derive the mapping of $u_1$ to $x_1$, and of $u_2$ to $x_2$.

The second best solution (solution (2) in part C of the figure) is extracted in the same manner---now, we start with the hypernode $(u_5,x_5,2)$ rather than $(u_5,x_5,1)$, and again follow incoming hyperedges in a top-down order until we reach the leaves. Similarly, we can extract all three non-nil solutions among the $4$-best solutions (illustrated in part C). As before, the outer tubes illustrate the edges of $S$, and the edges of $G$ are embedded inside based on the reconciliation.

\begin{itemize}
\item{{\bf Hypernodes.} For every vertex $u$ in $G$, a vertex $x$ in $S$ and an integer $i\in\{1,\ldots,k\}$, we have a hypernode $(u,x,i)$ in $\cal H$. Such a hypernode $(u,x,i)$ is associated with the $i^{\mathrm{th}}$ best (where ties are broken arbitrarily) solution mapping the subtree of $G$ rooted in $u$ to the subtree of $S$ rooted in $x$ that is a DLT scenario. In addition, for every integer $i\in\{1,\ldots,k\}$  we have a hypernode $(root,i)$ in the hypergraph $\cal{H}$. Such a hypernode $(root,i)$  is associated with the $i^{\mathrm{th}}$ best solution of mapping $G$ (entirely) to any subtree of $S$. Each hypernode $(u,x,i)$ has a score $\mathsf{c}(u,x,i)$, and each hypernode $(root,i)$ has a score $\mathsf{c}(root,i)$.
Moreover, each hypernode $(u,x,i)$ is associated with the  event corresponding to the mapping of $u$ and $x$ in the DLT scenario of $(u,x,i)$ (speciation, duplication or horizontal transfer), denoted $\mathsf{event}(u,x,i)$.}

\item{{\bf Supernodes.} For any vertex $u\in V(G)$ and vertex $x\in V(S)$, we define the {\em supernode} $(u,x)$ as the list $\left \{(u,x,i):1\leq i\leq k \right \}$ (i.e.~$(u,x)$ is the set of $k$ hypernodes corresponding to the mapping of the subtree of $G$ rooted in $u$ to the subtree of $G$ rooted in $x$). This notation will simplify our presentation.}

\item{{\bf Hyperedges.} We remind the reader that each hypernode $(u,x,i)\in V(\cal H)$ describes a DLT scenario. Each hypernode has exactly one incoming hyperedge, but it can have multiple outgoing hyperedges. In particular, for each hypernode $(u,x,i)\in V(\cal{H})$, the (only) incoming hyperedge $e=\left\langle T(e),h(e)\right\rangle=\left\langle [(v,y,j),(w,z,r)],(u,x,i)\right\rangle$ describes the mapping of the subtrees of the children  of $u$, namely, $v$ and $w$, in the scenario of $(u,x,i)$; here, the subtree of $v$ is mapped to the subtree of $y$ as in the scenario of $(v,y,j)$, and the subtree of $w$ is mapped to the subtree of $z$ as in the scenario of $(w,z,r)$.}
\end{itemize}

\section{Framework and Algorithms}

In this section, we elaborate on each of the three stages of the workflow in Section \ref{sec:introduction}.

\subsection{Stage 1: Hypergraph Construction}\label{sec:hypergraphCons}

The first stage of our framework is to construct the hypergraph described in Section~\ref{sec:hypergraph-def}. 
To this end, we develop an efficient algorithm that runs in time $O(m\cdot n\cdot k)$ and requires $O(m\cdot n\cdot k)$ space.

\subparagraph*{An Overview of the Algorithm.}
We iterate over all $u\in V(G)$ in postorder, as well as over all $x\in V(S)$ in postorder. (However, as explained immediately, 
when we consider a vertex $u\in V(G)$, after iterating over all vertices $x\in V(S)$ in postorder, we also iterate over all vertices $x\in V(S)$ in preorder.) In each iteration, corresponding to a pair $(u,x)$, we construct three lists: $p_\Sigma$ (speciation), $p_\Delta$ (duplication) and $p_\Theta$ (horizontal transfer). 
Specifically, $p_\Sigma$ should be a list of $k$-best solutions that are DLT scenarios where the subtree of $G$ rooted in $u$ is mapped to the subtree of $S$ rooted in $x$ under the restriction that the event corresponding to matching $u$ and $x$ is speciation. The meaning of the lists $p_\Delta$ and $p_\Theta$ is similar, where the restriction of speciation is replaced by duplication or horizontal transfer, respectively. 
Having these three lists suffices to construct the hypernode $(u,x)$.

To avoid  repetitive computation, we maintain three additional lists: $\mathsf{subtree}$, $\mathsf{subtreeLoss}$ and $\mathsf{incomp}$. Intuitively, $\mathsf{subtreeLoss}(u,x,i)$ represents the $i^{th}$ best cost of reconciliation of the tree rooted in $u$, such that $u$ may be mapped to any $y\leq x$ with a additional cost of one loss per edge in the path from $x$ to $y$, and $\mathsf{incomp}(u,x,i)$ represents the $i^{\mathrm{th}}$ best cost of a reconciliation of the subtree of $G$ rooted in $u$ with some subtree of $S$ whose root is a vertex $y$ incomparable to $x$. $\mathsf{subtree}$ is used in order to efficiently compute $\mathsf{incomp}$.
The notations $\mathsf{subtreeLoss}(u,x)$ , $\mathsf{subtree}(u,x)$ and $\mathsf{incomp}(u,x)$ refer to the lists of the $k$-best scores $\{\mathsf{subtreeLoss}(u,x,i)\}_{i=1}^k$, $\{\mathsf{subtree}(u,x,i)\}_{i=1}^k$ and $\{\mathsf{incomp}(u,x,i)\}_{i=1}^k$, respectively, similarly to our usage of the notation of a supernode.

The efficient computation of $p_\Sigma$, $p_\Delta$  and $p_\Theta$, along with the maintenance of $\mathsf{subtreeLoss}$, $\mathsf{subtree}$ and $\mathsf{incomp}$ themselves, is highly non-trivial.
On a high-level, we first initialize all five lists to contain only costs of $\infty$; then, still in the initialization phase, we add hypernodes that match between leaves of $G$ and $S$ in accordance with $\sigma$ and update $\mathsf{subtreeLoss}$ and $\mathsf{subtree}$ consequently. After the initialization, the main computation considers each $u\in V(G)$ in postorder, and performs two steps. In the first step, we consider each $x\in V(S)$ in postorder. Then, for each $i\in\{1,\ldots,k\}$, we compute $p_\Sigma(u,x,i)$, $p_\Delta(u,x,i)$ and $p_\Theta(u,x,i)$ based on somewhat involved recursive formulas. Afterwards, we construct the surpernode $(u,x)$, as well as compute the lists $\mathsf{subtreeLoss}(u,x)$ and $\mathsf{subtree}(u,x)$. In the second step, we consider each $x\in I(S)$ with children $y$ and $z$ in preorder, and compute the lists $\mathsf{incomp}(u,y)$ and $\mathsf{incomp}(u,z)$.

Having constructed all hypernodes of the form $(u,x,i)$ along with their ingoing hyperedges, it is trivial to construct the hypernodes of the form $(root,i)$ and their ingoing edges. 

\begin{figure}[htbp]
\begin{center}\fbox{\includegraphics[scale=1]{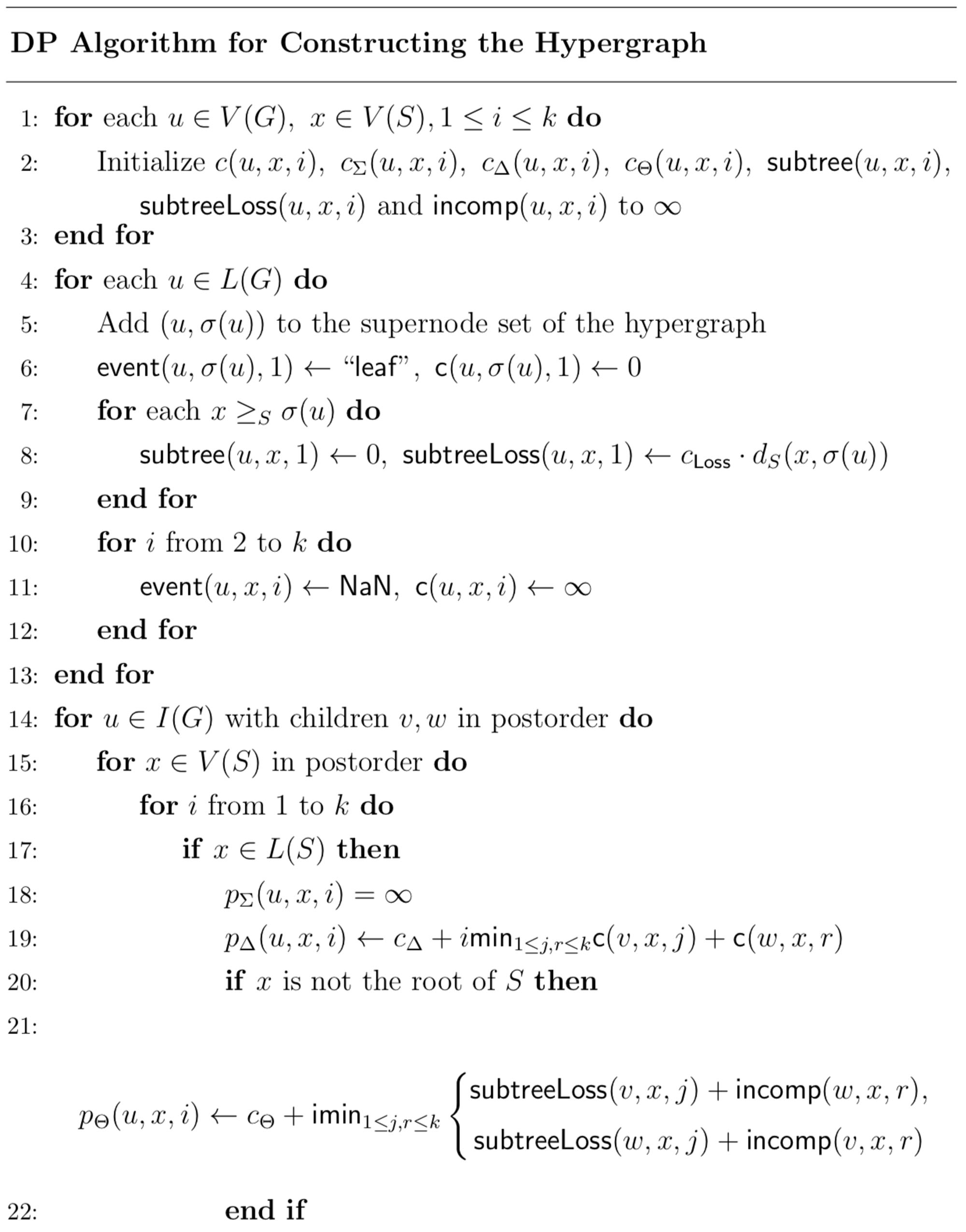}}  \caption{Psudocode of the algorithm (first part). The pseudocode is continued in Fig. 5. \label{fig:psaudo_1}}
\end{center}
\end{figure}

\begin{figure}[htbp]
\begin{center}\fbox{\includegraphics[scale=1]{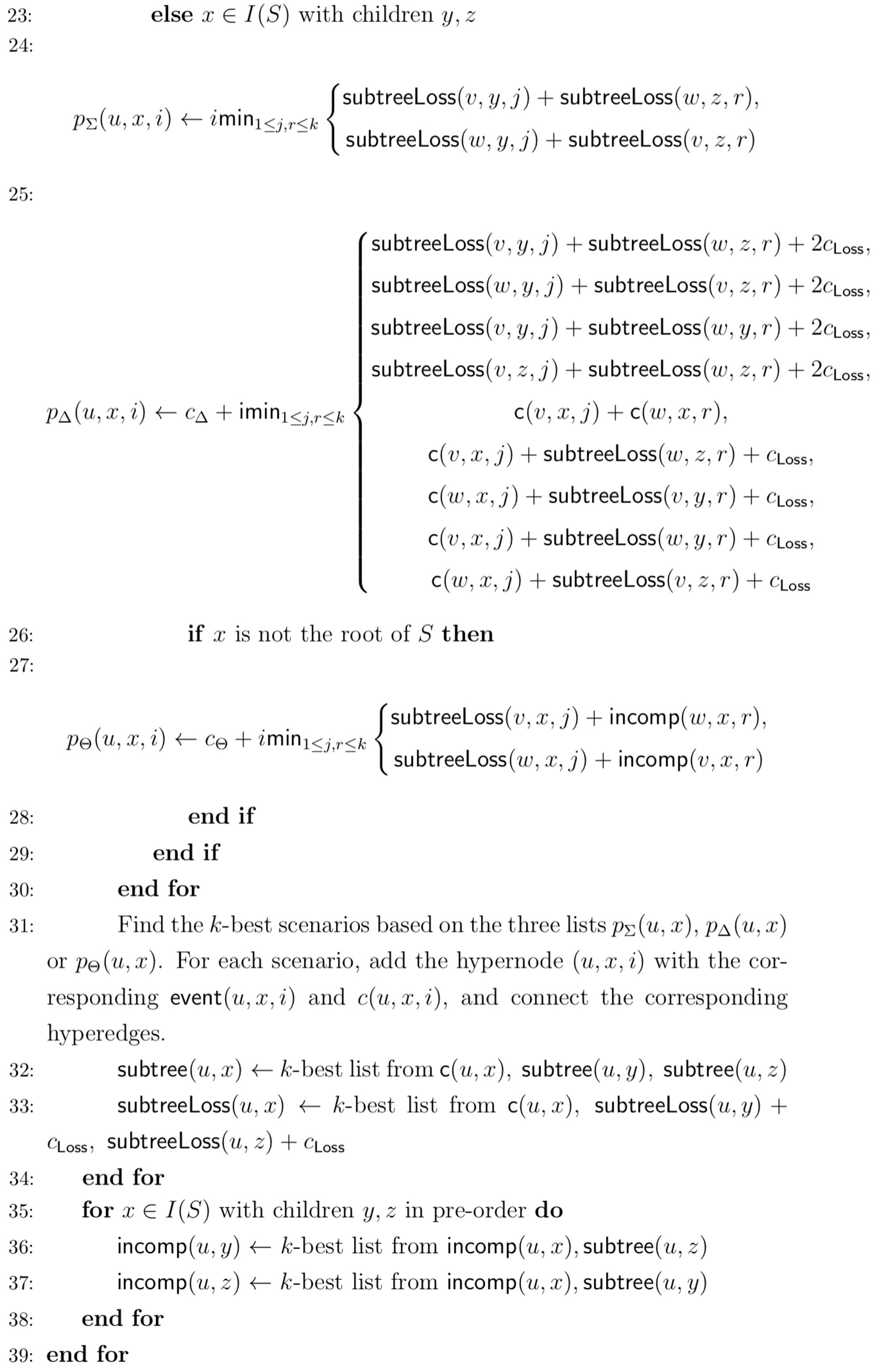}}  \caption{Psudocode of the algorithm (second part). This figure continues the pseudocode given in Fig. 4.\label{fig:psaudo_2}}
\end{center}
\end{figure}

\subparagraph*{Psudocode.}\label{sec:effi} The psudocode is given in Fig. \ref{fig:psaudo_1} and \ref{fig:psaudo_2}.
We use the notation $i \mathsf{min}$, defined as follows:
Let $X$ and $Y$ be sets, and consider a function $f:X\to Y$, and an index $i\in \{ 1,\dots ,|X|\}$.
Then, $i \mathsf{min}_{x'\in X} f(x')\overset{\Delta}{=}f(x)$ where $x$ is an element in $X$ such that there are exactly $i$ elements  $x'\in X$ satisfying $f(x')\leq f(x)$. In case $f$ is not an injective function, hence there are multiple choices for $x$, we break ties arbitrarily.

\noindent We proceed with a few clarifications of the pseudocode.

{\bf Initialization: Lines 1-13.} We initialize all lists to contain only scores of $\infty$ (lines 1-3). Then, the lists  associated with a matching between leaves that comply with $\sigma$---that is, supernodes of the form $(u,\sigma(u))$ for some $u\in V(G)$---are inserted into the hypergraphs, and their topmost items are updated with a leaf event, cost 0, and $\mathsf{subtreeLoss}$ and $\mathsf{subtree}$ 0 (because the cost of the best solutions mapping a gene to its species is~0).

{\bf Division into First and Second Phases: Lines 14-39.} For each vertex $u\in I(G)$ in postorder (line 14), we have two phases, on which we elaborate below. In the first phase (lines 15-34), we consider each vertex $x\in V(S)$ in postorder and perform most computations, and in the second phase (lines 35-38) we consider each vertex $x\in V(S)$ in postorder and compute the lists of $\mathsf{incomp}$.

{\bf Recursive Formulas for $p_\Sigma$, $p_\Delta$  and $p_\Theta$: Lines 16-30.} In this part of the first phase, we find the $k$-best costs for mapping the subtree of $G$ rooted in $u$ to the subtree of $S$ rooted in $x$ for each possible event (speciation, duplication or horizontal transfer), based on  computations done in previous iterations or the initialization. The recursive formulas for these computations are directly given in the pseudocode. 



{\bf Updating $\mathsf{c},\ \mathsf{subtreeLoss}$ and $\mathsf{subtree}$ in First Phase: Lines 31-32.} First, in line 31, we immediately find $k$-best costs for mapping the subtree of $u$ to the subtree of $x$ (i.e.~we compute $\mathsf{c}(u,x)$) by selecting $k$-best costs from the list that is the combination of $p_\Sigma(u,x)$, $p_\Delta(u,x)$  and $p_\Theta(u,x)$. Notice that in this line, we also add the appropriate hypernodes and hyperedges to the hypergraph. $\mathsf{event}(u,x,i)$ is defined by the source list ($p_\Sigma(u,x)$, $p_\Delta(u,x)$  or $p_\Theta(u,x)$) it came from. As before, if the combined list is shorter than $k$, we add hypernodes with $\mathsf{event}=\mathsf{Nan}$ and $\mathsf{cost}=\infty$. Secondly, in lines 32-33, we find $k$-best costs for mapping the subtree of $u$ to the subtree of some vertex $x'$ in the subtree of $x$, with and without loss events (i.e.~we compute $\mathsf{subtreeLoss}(u,x)$ and $\mathsf{subtree}(u,x)$) by selecting $k$-best costs from the combination of pre-calculated lists. 

{\bf Updating $\mathsf{incomp}$ in Second Phase: Lines 35-38.} To compute the lists of the form $\mathsf{incomp}(u,\cdot)$, in the second phase we iterate over all vertices $x\in I(S)$ with children $y$ and $z$ in {\em preorder}. We note that now the traversal of $S$ is in preorder rather than postorder because the computation of a list $\mathsf{incomp}(u,a)$ for a vertex $a\in V(S)$ that is not the root of $S$ relies on having already computed the list $\mathsf{incomp}(u,b)$ where $b$ is the parent of $a$ in $S$. Specifically, for a vertex $x\in I(S)$ with children $y$ and $z$, we compute the list $\mathsf{incomp}(u,y)$ by selecting $k$-best costs from the list that is the combination of $\mathsf{incomp}(u,x)$ and $\mathsf{subtree}(u,z)$, and symmetrically for $\mathsf{incomp}(u,z)$ (swapping the roles of $y$ and~$z$).

\begin{lemma}\label{lem:effCorrectness}
Given an instance $(G,S,\sigma)$ of the DLT problem and a positive integer $k$, the algorithm correctly constructs a hypergraph $\cal H$ that represents $k$-best solutions for $(G,S,\sigma)$.
\end{lemma}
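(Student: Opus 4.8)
The plan is to prove correctness by a nested induction that mirrors the two-level traversal used by the algorithm: an outer strong induction on the postorder position of $u\in V(G)$, and, for a fixed $u$, an inner induction on the postorder position of $x\in V(S)$ (first phase) and then its preorder position (second phase). The inductive invariant I would carry is that, at the moment $u$ is processed, every auxiliary list of every vertex strictly below $u$ in $G$ already holds its intended $k$-best quantity: $\mathsf{c}(v,x)$ holds the $k$-best costs of reconciling $G_v$ with $S_x$; $\mathsf{subtree}(v,x)$ and $\mathsf{subtreeLoss}(v,x)$ hold the $k$-best costs over all $y\leq_S x$ of reconciling $G_v$ with $S_y$, \emph{without}, respectively \emph{with}, the penalty $c_{\mathsf{loss}}\cdot d_S(x,y)$; and $\mathsf{incomp}(v,x)$ holds the $k$-best costs over all $y$ incomparable to $x$. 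The base case is the initialization (lines 1--13): for a leaf $u\in L(G)$ the only DLT scenario maps $u$ to $\sigma(u)$ at cost $0$, which is exactly what is inserted, all other entries being $\infty$.

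For the inductive step I would first establish a \emph{decomposition lemma} that pins down, for each event class, precisely which pairs of child-mappings constitute a valid DLT scenario at $(u,x)$ and what their cost is, reading this off the constraints of the DLT-scenario definition together with the $\mathsf{Loss}_\alpha$ formula. Concretely: a speciation forces $\gamma(u)=\mathsf{lca}(\gamma(v),\gamma(w))=x$ with $\gamma(v),\gamma(w)$ incomparable, so the two children of $u$ descend into the two distinct child-subtrees of $x$, and the $-1$ terms of the speciation loss are absorbed exactly by charging $\mathsf{subtreeLoss}$ of each child against a \emph{child} of $x$ rather than against $x$ itself; a duplication forces both children into $S_x$, matching $\mathsf{subtreeLoss}(\cdot,x)$ with no offset; and a horizontal transfer sends one child into $S_x$ (captured by $\mathsf{subtreeLoss}(\cdot,x)$) and the other to an incomparable vertex (captured by $\mathsf{incomp}(\cdot,x)$), with loss charged only on the non-transferred edge. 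This certifies that the recursive formulas for $p_\Sigma$, $p_\Delta$ and $p_\Theta$ range over exactly the valid scenarios of the respective type.

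The heart of the argument---and the step I expect to be the main obstacle---is justifying that restricting each child list to its top $k$ entries, and extracting via $i\mathsf{min}$, loses no scenario belonging to the global $k$-best. Here I would prove the standard monotonicity/exchange claim: if a scenario is among the $k$ cheapest at $(u,x)$ and decomposes using the $j$-th entry of one child list and the $r$-th entry of another, then fixing the second component and replacing the first by each of the $j-1$ cheaper entries of its list produces $j-1$ distinct valid scenarios of no greater cost, so the scenario has rank at least $j$ and hence $j\leq k$; symmetrically $r\leq k$. The same exchange argument applies verbatim to the merge that builds $\mathsf{c}(u,x)$ from $p_\Sigma,p_\Delta,p_\Theta$ (line 31), to those building $\mathsf{subtree}(u,x)$ and $\mathsf{subtreeLoss}(u,x)$ from $\mathsf{c}(u,x)$ and the children's lists (lines 32--33), and to the one building $\mathsf{incomp}(u,y)$ from $\mathsf{incomp}(u,x)$ and $\mathsf{subtree}(u,z)$ in the preorder second phase (lines 35--38); for this last recursion I would additionally verify the set identity that the vertices incomparable to $y$ are exactly those incomparable to $x$ together with those in $S_z$, which is what forces the preorder traversal.

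Finally I would check that the hyperedge bookkeeping faithfully records each decomposition: every hypernode $(u,x,i)$ receives as its unique incoming hyperedge the pair of child-hypernodes realizing its cost, so following incoming hyperedges top-down reconstructs precisely the DLT scenario of cost $\mathsf{c}(u,x,i)$. The aggregation forming the $(root,i)$ hypernodes from the supernodes $(r,x)$ over all $x\in V(S)$, where $r$ is the root of $G$, then represents the $k$-best reconciliations of $G$ into any subtree of $S$, completing the induction. Beyond the $k$-best exchange argument, the only delicate routine bookkeeping is the $\mathsf{subtree}$-versus-$\mathsf{subtreeLoss}$ distinction and the speciation off-by-one, which must be tracked carefully to avoid miscounting losses.
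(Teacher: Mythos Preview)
Your proposal is correct and follows essentially the same approach as the paper: induction on the algorithm's computation order (the paper phrases it as a single total order on triples $(\mathsf{table},u,x)$ rather than a nested induction, but the content is identical), with the invariant that each of $\mathsf{c}$, $\mathsf{subtree}$, $\mathsf{subtreeLoss}$, $\mathsf{incomp}$ holds its intended $k$-best list, verified case-by-case for $p_\Sigma$, $p_\Delta$, $p_\Theta$ and for the $\mathsf{incomp}$ recursion via the set identity you name. Your explicit monotonicity/exchange argument---that if a top-$k$ scenario at $(u,x)$ uses the $j$-th entry of a child list then $j\le k$---is in fact more careful than the paper's own proof, which takes this truncation step for granted without spelling it out.
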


\begin{proof}[Proof of Lemma \ref{lem:effCorrectness}.] 
We prove that for every pair of vertices $u\in V(G)$ and $x\in V(S)$, and every index $i\in\{1,\ldots,k\}$, if there exists an $i^{\mathrm{th}}$ best DLT scenario mapping the subtree of $G$ rooted in $u$ to the subtree of $S$ rooted in $x$, then the hypernode $(u,x,i)$ is inserted into the hypergraph $\cal H$ under construction with association to this scenario. In this lemma, the proof of this claim is done in conjunction with the proof that for every pair of vertices $u\in V(G)$ and $x\in V(S)$, and every index $i\in\{1,\ldots,k\}$, the following equalities hold.

\begin{itemize}
\item $\mathsf{subtreeLoss}(u,x,i)$ is the $i^{\mathrm{th}}$ best cost of a DLT scenario mapping the subtree of $G$ rooted in $u$ to some subtree of $S$ whose root is a vertex $y$ that is a descendant of $x$, with additional cost of one loss per each edge in the path from $x$ to $y$.
\item $\mathsf{subtree}(u,x,i)$ is the $i^{\mathrm{th}}$ best cost of a DLT scenario mapping the subtree of $G$ rooted in $u$ to some subtree of $S$ whose root is a vertex $y$ that is a descendant of $x$.
\item $\mathsf{incomp}(u,x,i)$ is the $i^{\mathrm{th}}$ best cost of a DLT scenario mapping the subtree of $G$ rooted in $u$ to some subtree of $S$ whose root is a vertex $y$ that is incomparable to $x$.
\end{itemize}

The proof is by induction on the order of computation. \\In particular, $\mathsf{table}_1(u_1,x_1)<\mathsf{table}_2(u_2,x_2)$ where $\mathsf{table}_1,\mathsf{table}_2\in\{\mathsf{c},\\ \mathsf{subtreeLoss},\mathsf{subtree},\mathsf{incomp}\}$ if one of the following conditions holds:
\begin{itemize}
\item $u_1$ is visited before $u_2$ in the postorder traversal of $G$.
\item $u_1=u_2$, $\mathsf{table}_1,\mathsf{table}_2\in\{\mathsf{c},\mathsf{subtreeLoss},\mathsf{subtree}\}$, and $x_1$ is visited before $x_2$ in the {\em postorder} traversal~of~$S$.
\item $u_1=u_2$, $x_1=x_2$, $\mathsf{table}_1=\mathsf{c}$ and $\mathsf{table}_2\in \{\mathsf{subtreeLoss},\mathsf{subtree}\}$.
\item $u_1=u_2$, $\mathsf{table}_1\in \{\mathsf{subtreeLoss},\mathsf{subtree}\}$ and $\mathsf{table}_2=\mathsf{incomp}$.
\item $u_1=u_2$, $\mathsf{table}_1=\mathsf{table}_2=\mathsf{incomp}$, and $x_1$ is visited before $x_2$ in the {\em preorder} traversal of $S$.
\end{itemize}

The basis of the induction comprises of the computation of hypernodes of the form $(u,x,i)$ where $u\in L(G)$. To prove its correctness, consider such a hypernode $(u,x,i)$. If $i=1$ and $x=\sigma(u)$, then the algorithm inserts the hypernode $(u,x,1)$, assigning it a score of $0$, and setting the remaining fields as follows: $\mathsf{subtreeLoss}=0$ , $\mathsf{subtree}=0$ and $\mathsf{event=leaf}$; else, if $i=1$ and $x\geq_S \sigma(u)$, then the algorithm inserts the hypernode $(u,x,1)$, assigning it a score of $0$ and setting the remaining fields as follows: $\mathsf{subtreeLoss}=c_{\mathsf{Loss}}\cdot d_S(x,\sigma(u))$, $\mathsf{subtree}=0$ and $\mathsf{event=leaf}$; otherwise, the algorithm does not insert the hypernode---more precisely, it inserts a place-holder (whose event is NaN) with score $\infty$ and $\mathsf{subtree}$ value $\infty$. In both cases, $\mathsf{incomp}$ value remains $\infty$ as in its creation. The correctness of these operations directly follows from the definitions of $\mathsf{subtreeLoss}$, $\mathsf{subtree}$ and $\mathsf{incomp}$, and the fact that the only possible DLT scenario in this case maps $u$ to an ancestor of $\sigma(u)$, and the score of this match is $0$ in case losses are not counted, or with the additional loss costs otherwise.

For the inductive step, we consider some pair of vertices $u\in I(G)$ and $x\in V(S)$ along with a table $\mathsf{table}\in\{\mathsf{c},\mathsf{subtreeLoss},\mathsf{subtree},\mathsf{incomp}\}$, and prove that the values in $\mathsf{table}$ of the supernode $(u,x)$ are computed correctly.  For the inductive assumption,  suppose that for every triple $(\mathsf{table}',u',x')$ ordered before $(\mathsf{table},u,x)$, the values in $\mathsf{table}'$ of $(u',x')$ have already been computed~correctly. Here we provide a proof for $\mathsf{table}=\mathsf{c}$ and $\mathsf{table}=\mathsf{incomp}$. The full proof can be found in Section~3 of \cite{supmat2019}.

First, consider the case where $\mathsf{table}=\mathsf{incomp}$. By the pseudocode, if $x$ is the root of $S$, then $\mathsf{incomp}(u,x)$ does not contain any item (having score different from $\infty$) as in its creation, which is correct because in this case, there exists no vertex incomparable to $x$ and hence we cannot map one of the children of $u$ as required in the definition of the DLT scenarios that correspond to $\mathsf{incomp}(u,x)$.  Therefore, now suppose that $x$ is not the root of $S$, and let $p$ denote the parent of $x$ in $S$, and $s$ denote the sibling of $x$ in $S$ (i.e.~the other child of $p$ in $S$). Then, by the pseudocode, $\mathsf{incomp}(u,x)$ consists of the $k$-best scores from the lists $\mathsf{incomp}(u,p)$ and $\mathsf{subtree}(u,s)$. Observe that these two lists have already been computed. Thus, by the inductive hypothesis, $\mathsf{incomp}(u,p)$ consists of the scores of the $k$-best DLT scenarios mapping the subtree of $G$ rooted in $u$ to the subtree of $S$ rooted in some vertex incomparable to $p$, and $\mathsf{subtree}(u,s)$ consists of the scores of the $k$-best DLT scenarios mapping the subtree of $G$ rooted in $u$ to the subtree of $S$ rooted in some descendant of $s$. Notice that a DLT scenario maps the subtree of $G$ rooted in $u$ to a subtree of $S$ rooted in some vertex incomparable to $x$ if and only if it is a DLT scenario that maps the subtree of $G$ rooted in $u$ to one of the following subtrees: {\em (i)} a subtree of $S$ rooted in some vertex incomparable to $p$; {\em (ii)} a subtree of $S$ rooted in some descendant of $s$. Thus, it follows that $\mathsf{incomp}(u,x)$ is computed correctly.

Second, consider the case where $\mathsf{table}=\mathsf{c}$. By line 31 of the pseudocode, $\mathsf{c}(u,x)$ consists of the $k$-best scores from the lists $p_\Sigma(u,x)$, $p_\Delta(u,x)$  and $p_\Theta(u,x)$.  Thus, to prove the correctness of the computation of $\mathsf{c}(u,x)$, it suffices to prove that the following statement holds: $p_\Sigma(u,x)$, $p_\Delta(u,x)$  and $p_\Theta(u,x)$ consist of $k$-best DLT scenarios mapping the subtree of $G$ rooted in $u$ to the subtree of $S$ rooted in $x$ under the constraint that the event corresponding to the matching of $u$ and $x$ is speciation, duplication and horizontal transfer, respectively.

Towards the proof of the statement, consider the list $p_\Sigma(u,x)$. If $x\in L(S)$, then because $u\in I(G)$, there does not exist a DLT scenario mapping the subtree of $G$ rooted in $u$ to the subtree of $S$ rooted in $x$ under the constraint that the event corresponding to the matching of $u$ and $x$ is speciation, and hence the assignment of $\infty$ to every element $p_\Sigma(u,x,i)$ of the list is correct. Now, suppose that $x\in I(S)$. Then, by the pseudocode, $p_\Sigma(u,x)$ consists of the $k$-best scores present in the following multisets: 
\begin{itemize}
\item $\{\mathsf{subtreeLoss}(v,y,j)+\mathsf{subtreeLoss}(w,z,r)~|~1\leq j,r\leq k\}$ and 
\item $\{\mathsf{subtreeLoss}(w,y,j)+\mathsf{subtreeLoss}(v,z,r)~|~1\leq j,r\leq k\}$. 
\end{itemize}
Observe that the lists $\mathsf{subtreeLoss}(v,y),\mathsf{subtreeLoss}(w,z)$, $\mathsf{subtreeLoss}(w,y)$ and $\mathsf{subtreeLoss}(v,z)$ have already been computed.  By the definition of $\mathsf{Loss}_{\alpha}(u)$ when the event occurred in $u$ is speciation,  $\mathsf{Loss}_{\alpha}(u)=|d_S (x, \gamma (v))-1|+|d_S (x, \gamma (w))-1|=d_S (y, \gamma (w))+d_S (z, \gamma (v))$ in case $\gamma(w)\leq_S y$, and $\mathsf{Loss}_{\alpha}(u)=|d_S (x, \gamma (v))-1|+|d_S (x, \gamma (w))-1|=d_S (z, \gamma (w))+d_S (y, \gamma (v))$ otherwise.  Thus, by the inductive hypothesis, $\mathsf{subtreeLoss}(v,y)$ (resp., \\ $\mathsf{subtreeLoss}(w,z)$, $\mathsf{subtreeLoss}(w,y)$ and $\mathsf{subtreeLoss}(v,z)$) consists of the scores of the $k$-best DLT scenarios mapping the subtree of $G$ rooted in $v$ (resp., $w,w$ and $v$) to a subtree of $S$ rooted in some descendant of $y$ (resp., $z,y$ and $z$), with additional loss cost for each edge in the path from $y$ to $\gamma(v)$ (resp., $\gamma(w)$, $\gamma(w)$ and $\gamma(v)$). Notice that a DLT scenario maps the subtree of $G$ rooted in $u$ to a subtree of $S$ rooted in $x$ under the constraint that the event corresponding to the matching of $u$ and $x$ is speciation if and only if it is a DLT scenario that matches $u$ and $x$, maps the subtree of $G$ rooted in $v$ to a subtree of $S$ rooted in a descendant of one child ($y$ or $z$) of $x$, and the subtree of $G$ rooted in $w$ to a subtree of $S$ rooted in a descendant of the other child of $x$. Thus, it follows that $p_\Sigma(u,x)$ is computed correctly.

Now, consider the list $p_\Delta(u,x)$. In case $x\in I(S)$, let $y$ and $z$ denote its children. By the pseudocode, $p_\Delta(u,x)$ consists of the $k$-best scores obtained by adding $c_{\Delta}$ to the costs present in the following multisets, where only the first one is relevant in case $x\in L(S)$: 
\begin{itemize}
\item $\{\mathsf{c}(v,x,j)+ \mathsf{c}(w,x,r)~|~1\leq j,r\leq k\}$, 
\item $\{\mathsf{c}(v,x,j)+\mathsf{subtreeLoss}(w,z,r)~|~1\leq j,r\leq k\}+c_{\mathsf{Loss}}$,
\item $\{\mathsf{c}(w,x,j)+\mathsf{subtreeLoss}(v,y,r)~|~1\leq j,r\leq k\}+c_{\mathsf{Loss}}$, 
\item $\{\mathsf{c}(v,x,j)+\mathsf{subtreeLoss}(w,y,r)~|~1\leq j,r\leq k\}+c_{\mathsf{Loss}}$,
\item $\{\mathsf{c}(w,x,j)+\mathsf{subtreeLoss}(v,z,r)~|~1\leq j,r\leq k\}+c_{\mathsf{Loss}}$,
\item $\{\mathsf{subtreeLoss}(v,y,j)+\mathsf{subtreeLoss}(w,z,r)~|~1\leq j,r\leq k\}+2c_{\mathsf{Loss}}$, 
\item $\{\mathsf{subtreeLoss}(w,y,j)+\mathsf{subtreeLoss}(v,z,r)~|~1\leq j,r\leq k\}+2c_{\mathsf{Loss}}$, 
\item $\{\mathsf{subtreeLoss}(v,y,j)+\mathsf{subtreeLoss}(w,y,r)~|~1\leq j,r\leq k\}+2c_{\mathsf{Loss}}$ and 
\item $\{\mathsf{subtreeLoss}(v,z,j)+\mathsf{subtreeLoss}(w,z,r)~|~1\leq j,r\leq k\}+2c_{\mathsf{Loss}}$. 
 \end{itemize}
Observe that the lists $\mathsf{c}(v,x), \mathsf{c}(w,x),\mathsf{subtreeLoss}(w,y)$, $\mathsf{subtreeLoss}(v,z)$, \\$\mathsf{subtreeLoss}(w,z)$ and $\mathsf{subtreeLoss}(v,y)$ have already been computed. By the definition of $\mathsf{Loss}_\alpha(u)$ when the event occurred in $u$ is duplication, $\mathsf{Loss}_\alpha(u)=d_S (x, \gamma (v))+d_S (x, \gamma (w))$. If $v$ (resp. $w$) is mapped to $x$ and $w$ (resp. $v$) is mapped to a subtree of $S$  rooted in $y$ or $z$, it holds that $\mathsf{Loss}_\alpha(u)=d_S (y, \gamma (w))+1$ (resp. $\mathsf{Loss}_\alpha(u)=d_S (y, \gamma (v))+1$, $\mathsf{Loss}_\alpha(u)=d_S (z, \gamma (w))+1$ and $\mathsf{Loss}_\alpha(u)=d_S (z, \gamma (v))+1$). If both $v$ and $w$ are mapped to $x$, $\mathsf{Loss}_\alpha(u)=0$, and if $v$ (resp. $w$) is mapped to $y$ or $z$ and $w$ (resp. $v$) is mapped to $y$ or $z$, it holds that $\mathsf{Loss}_\alpha(u)=d_S (y, \gamma (v))+d_S (z, \gamma (w))+2$ (resp. $\mathsf{Loss}_\alpha(u)=d_S (y, \gamma (v))+d_S (z, \gamma (w))+2$, $\mathsf{Loss}_\alpha(u)=d_S (y, \gamma (v))+d_S (v, \gamma (w))+2$, $\mathsf{Loss}_\alpha(u)=d_S (z, \gamma (v))+d_S (z, \gamma (w))+2$, $\mathsf{Loss}_\alpha(u)=d_S (z, \gamma (v))+d_S (z, \gamma (w))+2)$. Thus, by the inductive hypothesis, we have that {\em (i)} $\mathsf{c}(v,x)$ (resp. $\mathsf{c}(w,x)$) consists of the scores of the $k$-best DLT scenarios mapping the subtree of $G$ rooted in $v$ (resp. $w$) to the subtree of $S$ rooted in $x$, and {\em (ii)} $\mathsf{subtreeLoss}(v,y)$ (resp. $\mathsf{subtreeLoss}(w,z)$, $\mathsf{subtree}(w,y)$ and $\mathsf{subtreeLoss}(v,z)$) consists of the scores of the $k$-best DLT scenarios mapping the subtree of $G$ rooted in $v$ (resp. $w,w$ and $v$) to the subtree of $S$ rooted in some descendant of $y$ (resp. $z,y$ and $z$) with additional loss cost for each edge in the path from $y$ to $\gamma(v)$ (resp. $\gamma(w),\ \gamma(w)$ and $\gamma(v)$). Notice that a DLT scenario maps the subtree of $G$ rooted in $u$ to a subtree of $S$ rooted in $x$ under the constraint that the event corresponding to the matching of $u$ and $x$ is duplication if and only if it is a DLT scenario that matches $u$ and $x$, maps the subtree of $G$ rooted in $v$ to a subtree of $S$ rooted in a descendant of $x$ (which can be $x$ itself), and the subtree of $G$ rooted in $w$ to a subtree of $S$ rooted in a descendant of $x$ (which can be $x$ itself). Thus, it follows that $p_\Delta(u,x)$ is computed correctly.

Lastly, consider the list $p_\Theta(u,x)$. If $x$ is the root of $S$, then there does not exist a DLT scenario mapping the subtree of $G$ rooted in $u$ to the subtree of $S$ rooted in $x$ under the constraint that the event corresponding to the matching of $u$ and $x$ is horizontal transfer (because there is no vertex incomparable to $x$ to whom one of the children of $u$ should be mapped), and hence it is correct that each element $p_\Theta(u,x,i)$ remains with the assignment of $\infty$ as it was created. Now, suppose that $x$ is not the root of $S$. Then, by the pseudocode, $p_\Theta(u,x)$ consists of the $k$-best scores present in the following multisets: 
\begin{itemize}
\item $\{\mathsf{subtreeLoss}(v,x,j)+\mathsf{incomp}(w,x,r)~|~1\leq j,r\leq k\}$ and
\item $\{\mathsf{subtreeLoss}(w,x,j)+\mathsf{incomp}(v,x,r)~|~1\leq j,r\leq k\}$. 
\end{itemize}
Observe that the lists $\mathsf{subtreeLoss}(v,x),\mathsf{subtreeLoss}(w,x)$, $\mathsf{incomp}(w,x)$ and $\mathsf{incomp}(v,x)$ have already been computed. By the definition of $\mathsf{Loss}_\alpha(u)$ when $(u,w)\in \Xi$ , $\mathsf{Loss}_\alpha(u)=d_S (x, \gamma (v))$. Thus, by the inductive hypothesis, we have that {\em (i)} $\mathsf{subtreeLoss}(v,x)$ (resp. $\mathsf{subtreeLoss}(w,x)$) consists of the scores of the $k$-best DLT scenarios mapping the subtree of $G$ rooted in $v$ (resp. $w$) to a subtree of $S$ rooted in some descendant of $x$ (which can be $x$ itself), with additional loss cost for each edge in the path from $x$ to $\gamma(v)$ (resp. $\gamma(w)$). {\em (ii)} $\mathsf{incomp}(v,x)$ (resp. $\mathsf{incomp}(w,x)$) consists of the scores of the $k$-best DLT scenarios mapping the subtree of $G$ rooted in $v$ (resp. $w$) to a subtree of $S$ rooted in some vertex incomparable to $x$. Notice that a DLT scenario maps the subtree of $G$ rooted in $u$ to a subtree of $S$ rooted in $x$ under the constraint that the event corresponding to the matching of $u$ and $x$ is horizontal transfer if and only if it is a DLT scenario that matches $u$ and $x$, maps the subtree of $G$ rooted in one of the children of $u$ ($v$ or $w$) to a subtree of $S$ rooted in a descendant of of $x$ (which can be $x$ itself), and the subtree of $G$ rooted in the other child of $u$ to a subtree of $S$ rooted in a vertex incomparable to $x$.   Thus, it follows that $p_\Sigma(u,x)$ is computed correctly.
\end{proof}

\begin{observation}\label{obs:effRuntime}
Given an instance $(G,S,\sigma)$ of the DLT problem and a positive integer $k$, the algorithm runs in time $O(m\cdot n\cdot k)$ an requires $O(m\cdot n\cdot k)$ space.\end{observation}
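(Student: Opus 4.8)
The plan is to establish the time bound by charging all of the algorithm's work to the number of executions of the inner-loop body, and to establish the space bound by directly counting the objects the algorithm stores.

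I would begin with the control flow. The main loop of lines 14--39 runs once for each $u\in I(G)$, hence $O(m)$ times, while the leaf cases are disposed of by the initialization of lines 1--13. For a fixed $u$, the first phase (lines 15--34) visits every $x\in V(S)$ exactly once in postorder, and the second phase (lines 35--38) visits every $x\in I(S)$ exactly once in preorder, so each phase performs $O(n)$ iterations. Thus the body associated with a single pair $(u,x)$ is executed $O(mn)$ times in total, and it suffices to show that this body runs in $O(k)$ time, which yields total time $O(mn)\cdot O(k)=O(mnk)$.

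The heart of the argument is therefore to verify that, for each pair $(u,x)$, the constant number of list computations cost $O(k)$ altogether. These split into two kinds. The updates producing $\mathsf{c}(u,x)$, $\mathsf{subtree}(u,x)$, $\mathsf{subtreeLoss}(u,x)$ and $\mathsf{incomp}(u,x)$ each select the $k$ smallest entries from the union of a constant number of already-sorted length-$k$ lists (the lists $p_\Sigma,p_\Delta,p_\Theta$; the children's $\mathsf{subtree}$ and $\mathsf{subtreeLoss}$ lists; or the parent's $\mathsf{incomp}$ list together with the sibling's $\mathsf{subtree}$ list), possibly shifted by a fixed loss constant, which preserves order. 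A single simultaneous linear scan over these sorted lists emits their $k$ smallest values in $O(k)$ time, with no priority queue. The computations of $p_\Sigma$, $p_\Delta$ and $p_\Theta$ are the delicate ones: each extracts the $k$ smallest values of a constant number of multisets of the form $\{a_j+b_r:1\le j,r\le k\}$ whose two argument lists are sorted. Here I would rely on the fact that such a multiset is a $k\times k$ grid whose entries are nondecreasing along every row and every column, so that its $k$ smallest values can be surfaced by advancing a linear-size frontier of the grid $k$ times. This is precisely the step at which the improvement over the naive $O(mn(n+k)\log(n+k))$ algorithm is realized: because the argument lists are the aggregated length-$k$ tables $\mathsf{subtreeLoss}$, $\mathsf{incomp}$ and $\mathsf{c}$ rather than a scan over all $n$ descendant or incomparable species, the grid has side $k$ and not $n+k$ (removing the extra $n$), while the grid's monotonicity is meant to let the frontier advance without a heap (removing the $\log$ factor). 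Granting this, every list computation for $(u,x)$ is $O(k)$.

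For the space bound I would count the stored objects directly. The hypernodes are exactly the $mnk$ triples $(u,x,i)$ together with the $k$ hypernodes of the form $(root,i)$; since the construction of Section~\ref{sec:hypergraph-def} gives each hypernode a single incoming hyperedge whose tail has length two, the hypergraph occupies $O(mnk)$ space, and the same bound applies to the tables $\mathsf{c}$, $\mathsf{subtree}$, $\mathsf{subtreeLoss}$ and $\mathsf{incomp}$, each indexed by $(u,x,i)$. Their initialization in lines 1--3 also costs $O(mnk)$ time, consistent with the time bound. The one point requiring care is that the multisets underlying $p_\Sigma,p_\Delta,p_\Theta$ have $\Theta(k^2)$ elements; I would emphasize that these are never materialized, the frontier extraction using only $O(k)$ reusable scratch space. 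Combining the two parts gives the claimed $O(mnk)$ time and $O(mnk)$ space. The step I expect to be the main obstacle is exactly this queue-free, $O(k)$ extraction of the $k$ smallest pairwise sums: one must argue that the row/column monotonicity of each grid guarantees that an amortized-constant advance of the frontier always exposes the next needed value, so that no logarithmic heap operations are incurred, and that this reasoning applies uniformly to each of the constant number of grids merged into a single $p$-list.
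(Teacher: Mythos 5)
Your overall decomposition is exactly the paper's: $O(mn)$ iterations of the $(u,x)$ body, $O(k)$ work per body, space bounded by a direct count of the $O(mnk)$ hypernodes and table entries (the paper simply says ``space is bounded by time''). The paper's own proof is in fact far terser than yours --- it asserts that ``from the pseudocode, it is clear that the computation of each one of these lists is done in time $O(k)$'' and stops there. So the place where you go beyond the paper is the one place where your argument has a genuine gap, and you have correctly located it yourself: the extraction of the $k$ smallest values of a multiset $\{a_j+b_r : 1\le j,r\le k\}$ from two sorted length-$k$ lists, without a priority queue. Row/column monotonicity of the $k\times k$ sum grid does \emph{not} by itself yield an amortized-constant frontier advance. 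After $i$ extractions the set of extracted positions is a staircase-shaped down-set whose frontier can contain $\Theta(\min(i,k))$ incomparable candidates, and monotonicity gives no pointer to which of them is the next minimum; a plain scan of the frontier costs $\Theta(k)$ per extraction, i.e.\ $\Theta(k^2)$ total, and the standard fix is a heap, which reintroduces the $\log k$ you are trying to avoid. So the step as you state it would fail.

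The claim itself is true, but it needs a real theorem rather than a frontier scan: selection of the $k$ smallest elements of $X+Y$ for sorted $X,Y$ can be done in $O(k)$ time by the Frederickson--Johnson selection algorithm (equivalently, by Frederickson's $O(k)$ heap-selection applied to a tree-shaped arrangement of the candidate DAG), and since only a constant number of such grids are merged per cell, $O(k)$ per cell follows. Citing or reproducing such a result is the missing ingredient; alternatively one must exhibit whatever specific combination order the paper's (figure-only) pseudocode uses and argue its correctness and cost directly. Your space accounting, including the observation that the $\Theta(k^2)$ sum multisets are never materialized, is fine and is slightly more careful than the paper's one-line remark.
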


\begin{proof}[Proof of observation \ref{obs:effRuntime}.]
For each pair of vertices $u\in V(G)$ and $x\in V(S)$, we construct a tuple of lists $(p_\Sigma(u,x),p_\Delta(u,x),p_\Theta(u,x),\mathsf{subtreeLoss}(u,x),$ $\mathsf{subtree}(u,x),\mathsf{incomp}(u,x))$. From the pseudocode, it is clear that the computation of each one of these lists is done in time $O(k)$. Thus, we have that the total running time is $O(m\cdot n\cdot k)$. As space is bounded by time, the observation follows.
\end{proof}

\subsection{Stage 2: Assigning Probabilities} \label{sec:probAssignment)}

In the second stage, we assign a probability to each hypernode in $\mathcal{H}$, so that a hypernode with best score has the highest probability, and hypernodes with score $\infty$ (the worst  possible score) have probability 0.

\subparagraph*{Weight Computation.}  Let $\gamma\in \mathbb{R}^+$ be a user-specified parameter. $\gamma$ is used to control the range between poorly scoring nodes versus top scoring nodes. As $\gamma$ grows lower, hypernodes with higher (worse) scores are assigned probabilities much lower than hypernodes with lower scores.

Denote $r=root$, and let $\mathsf{m}(r)$ be the largest integer $i\in\{1,\ldots,k\}$ such that $\mathsf{c}(r,i)\neq\infty$. ( Recall that the notation $(root,i)$ was defined in Section \ref{sec:hypergraph-def}). For a node $(r,i)$ where $i\in\{1,\ldots,\mathsf{m}(r)\}$, define
 $\displaystyle{\mathsf{w'}(r,i) =e^{\gamma\frac{\mathsf{c}(r,1)-\mathsf{c}(r,i)}{\mathsf{c}(r,1)-\mathsf{c}(r,\mathsf{m}(r))}}}$.
Then, the weight of a node $(r,i)$, which stands for the (unconditional) probability that the scenario described by $(r,i)$ happens, is defined as follows: if $i\in\{1,\ldots,\mathsf{m}(r)\}$, then $\mathsf{w}(r,i) = \frac{\mathsf{w'}(r,i)}{\sum_{j=1}^{\mathsf{m}(r)}{\mathsf{w'}(r,j)}}$; otherwise (i.e.~if $i\in\{\mathsf{m}(r)+1,\mathsf{m}(r)+2,\ldots,k\}$),~$\mathsf{w}(r,i) = 0$.

We now turn to define the weight of a hypernode $(u,x,i)$, which should stand for the (unconditional) probability that the scenario described by $(u,x,i)$ happens. 
The definition is recursive. In the basis, where $u$ is the root of $G$, we define $\mathsf{w}(u,x,i)$ (for any $x\in V(S)$ and $i\in \{1,\ldots,k\}$) as follows: if there exists an index $j\in\{1,\ldots,k\}$ such that $(r,j)$ is derived from $(u,x,i)$ (here, it means that they represent the same scenario), then $\mathsf{w}(u,x,i)=\mathsf{w}(r,j)$; otherwise, $\mathsf{w}(u,x,i)=0$.

Now, consider $v$ that is not the root of $G$. We define $\mathsf{w}(v,y,i)$ (for any $y\in V(S)$ and $i\in \{1,\ldots,k\}$) as follows.
First, let $\mathsf{D}(v,y,i)$ denote the collection of nodes $(u,x,j)$ such that $\mathsf{c}(u,x,j)$ was derived from $\mathsf{c}(v,y,i)$---in other words, the hypergraph has an hyperedge directed from $(v,y,i)$ (and some other node) to $(u,x,j)$. In particular, $u$ is the parent of $v$ in $G$, hence the weight $\mathsf{w}(u,x,j)$ is calculated before the weight $\mathsf{w}(v,y,i)$.
Then, define
$\mathsf{w}(v,y,i) = \sum_{(u,x,j)\in \mathsf{D}(v,y,i)}\mathsf{w}(u,x,j)$.

Note that $\sum_{i\in\{1,\ldots,k\}}\mathsf{w}(r,i) = 1$. 

\begin{lemma} \label{lem:prob} For any two compatible $u\in L(G)$ and $x\in L(S)$, $\mathsf{w}(u,x,1)=1$.
\end{lemma}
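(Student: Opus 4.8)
The plan is to prove a stronger \emph{conservation of probability mass} statement and then specialize it to a compatible leaf pair. Concretely, for every vertex $u\in V(G)$ define the total weight $W(u)=\sum_{x\in V(S)}\sum_{i=1}^{k}\mathsf{w}(u,x,i)$, and prove that $W(u)=1$ for all $u\in V(G)$ by induction along $G$ from the root downward (this is exactly the order in which the recursive definition of $\mathsf{w}$ is evaluated, since each parent weight is computed before its children's). The lemma then follows by showing that, for a leaf $u$ with $x=\sigma(u)$, all of this unit mass is concentrated on the single hypernode $(u,x,1)$.

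For the base case of the conservation claim, let $r^G$ denote the root of $G$. By the note following the definition of the weights, $\sum_{i=1}^{k}\mathsf{w}(r,i)=1$, and by the basis of the recursive definition each $(root,j)$ of positive weight is derived from exactly one hypernode $(r^G,x,i)$ and conversely; summing the identity $\mathsf{w}(r^G,x,i)=\mathsf{w}(r,j)$ over all $x$ and $i$ therefore gives $W(r^G)=\sum_{j}\mathsf{w}(r,j)=1$. For the inductive step, let $v$ be a child (in $G$) of a vertex $u$ for which $W(u)=1$ is already known. Expanding the definition, $W(v)=\sum_{y,i}\mathsf{w}(v,y,i)=\sum_{y,i}\sum_{(u,x,j)\in\mathsf{D}(v,y,i)}\mathsf{w}(u,x,j)$. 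The key structural observation is that each hypernode has \emph{exactly one} incoming hyperedge, whose tail contains exactly one entry for the child $v$; hence every hypernode $(u,x,j)$ of positive weight lies in exactly one set $\mathsf{D}(v,y,i)$ (placeholder hypernodes of cost $\infty$ contribute $0$ and may be ignored). Reindexing the double sum by the parent therefore yields $W(v)=\sum_{x,j}\mathsf{w}(u,x,j)=W(u)=1$, completing the induction.

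It remains to concentrate this unit mass at $(u,x,1)$ when $u\in L(G)$ and $x=\sigma(u)$. By DLT-scenario constraint~\ref{lst:line:DLT1}, every DLT scenario maps the leaf $u$ to $\sigma(u)=x$, and since the subtree of $G$ rooted at $u$ is the single vertex $u$, the unique such scenario has cost $0$ and is, by Lemma~\ref{lem:effCorrectness}, precisely the scenario associated with the slot-$1$ hypernode $(u,x,1)$. Consequently, for any $(x',i)\neq(x,1)$ no positive-weight parent hypernode can carry $(u,x',i)$ as the $v$-entry of its (unique, non-placeholder) incoming hyperedge, so $\mathsf{D}(u,x',i)$ contains only zero-weight nodes and $\mathsf{w}(u,x',i)=0$. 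Combining this with $W(u)=1$ gives $\mathsf{w}(u,x,1)=W(u)-\sum_{(x',i)\neq(x,1)}\mathsf{w}(u,x',i)=1$, as required.

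The main obstacle is the inductive step of the conservation claim, and in particular justifying the reindexing of the double sum: one must argue carefully that every positive-weight parent hypernode contributes its weight to exactly one child-hypernode for $v$ — no more (uniqueness of the incoming hyperedge and of its $v$-entry) and no less (a positive-weight hypernode is a genuine, non-placeholder scenario node, so its incoming hyperedge really exists and really references a $v$-entry). The concentration step additionally requires care with the bookkeeping leaf hypernodes $(u,x',1)$ with $x'>_S\sigma(u)$ that the algorithm also inserts; the point is that these are never the $v$-entry of a valid scenario's hyperedge — that role is always filled by $(u,\sigma(u),1)$ — so they remain at weight $0$ and do not disturb the argument.
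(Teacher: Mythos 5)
Your proof is correct and takes essentially the same route as the paper's: both establish the stronger invariant $\sum_{x,i}\mathsf{w}(u,x,i)=1$ for every $u\in V(G)$ by top-down induction, using the uniqueness of each hypernode's incoming hyperedge to reindex the double sum, and then concentrate the unit mass on $(u,\sigma(u),1)$ at a leaf. If anything, you are slightly more careful than the paper in dispatching the zero-weight bookkeeping hypernodes $(u,x',1)$ with $x'>_S\sigma(u)$ in the concentration step.
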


\begin{proof}[Proof of Lemma \ref{lem:prob}.]
We will verify a stronger property than the one in the statement of the lemma: For every vertex $u$ in the Gene tree $G$, it holds that
\[\sum_{x,i: (u,x,i)\in V({\cal H})}\mathsf{w}(u,x,i) = 1.\]
Before we verify this property, observe that when $u$ is a leaf, then $\mathsf{c}(u,x,1)=0$ for the unique vertex $x$ that is compatible with $u$, and $\mathsf{c}(u,x,i)=\infty$ (which means that $\mathsf{D}(u,x,i)=\emptyset$ and hence $\mathsf{w}(u,x,i)=0$) for any other pair $(x,i)$. Thus, the stronger property implies the correctness of the weaker statement regarding leaves.

To prove the (stronger) property above, we use induction. In the basis, $u$ is the root of the Gene tree $G$. Then, we have that $\sum_{x,i: (u,x,i)\in V({\cal H})}\mathsf{w}(u,x,i) = \sum_{i\in\{1,2,\ldots,k\}}\mathsf{w}(r,i) = 1$, and therefore the property holds. Now, suppose that $u$ is not the root of $G$, and that the property holds for each of its ancestors. Let $v$ be the parent of $u$ in $\cal H$. Then, we have that
\[\begin{array}{ll}
\displaystyle{\sum_{x,i: (u,x,i)\in V({\cal H})}\mathsf{w}(u,x,i)} & = \displaystyle{\sum_{x,i: (u,x,i)\in V({\cal H})}\sum_{y,j: (v,y,j)\in \mathsf{D}(u,x,i)}\mathsf{w}(v,y,j)}\\
& = \displaystyle{\sum_{y,i: (v,y,i)\in V({\cal H})}\mathsf{w}(v,y,i)} =1.
\end{array}\]
Here, the first equality follows directly from the definition of weights. The second equality follows from the fact that each hypernode $(v,y,i)$ (for any $y$ and $i$) that has positive weight is derived from exactly one hypernode $(u,x,j)$ (for some specific $x$ and $j$). (However, each hypernode $(u,x,j)$ can be used to derive several hypernodes $(v,y,i)$.) The last equality follows from the inductive hypothesis. This completes the proof.
\end{proof}

\begin{observation} Time and Space Complexity: Iterating the hypergraph in $O(|V(\mathcal{H})|)=O(m\cdot n\cdot k)$ time and space.
\end{observation}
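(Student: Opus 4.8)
The plan is to prove the two halves of the statement in turn: the cardinality bound $|V(\mathcal{H})| = O(m \cdot n \cdot k)$, and the claim that the weight assignment of Stage~2 can be carried out in time and space linear in $|V(\mathcal{H})|$. The first half is a direct count. By the definition in Section~\ref{sec:hypergraph-def}, the hypernodes are the $m \cdot n \cdot k$ triples $(u,x,i)$ with $u \in V(G)$, $x \in V(S)$, $i \in \{1,\ldots,k\}$, together with the $k$ hypernodes $(root,i)$, so $|V(\mathcal{H})| = m \cdot n \cdot k + k = O(m \cdot n \cdot k)$. Since every hypernode has exactly one incoming hyperedge, $|E(\mathcal{H})| = |V(\mathcal{H})|$, and each hyperedge $e$ has tail size $|T(e)| = 2$.

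For the second half, I would compute the weights in a single pass over $\mathcal{H}$ that honors the dependency order of the recursive definition, namely a root-to-leaves traversal of $G$ in which the $k$ root hypernodes $(root,i)$ are handled first, followed by the hypernodes of the root of $G$, and then, for each remaining $v \in V(G)$, all hypernodes of $v$ are visited after all hypernodes of its parent. The base case computes $\mathsf{m}(r)$ and the normalizer $\sum_{j=1}^{\mathsf{m}(r)}\mathsf{w'}(r,j)$ in $O(k)$ time, sets every $\mathsf{w}(r,i)$, and then transfers each $\mathsf{w}(r,j)$ to the unique hypernode of the root of $G$ from which $(r,j)$ is derived.

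The efficiency hinges on reorganizing the defining recurrence $\mathsf{w}(v,y,i) = \sum_{(u,x,j)\in \mathsf{D}(v,y,i)}\mathsf{w}(u,x,j)$ as a \emph{push} rather than a \emph{pull}: I would initialize every accumulator to $0$ and, as soon as the weight of a hypernode $(u,x,j)$ is finalized, add it to the accumulators of the two tail hypernodes of the unique incoming hyperedge of $(u,x,j)$. Because $(u,x,j) \in \mathsf{D}(v,y,i)$ precisely when $(v,y,i)$ is a tail node of the hyperedge of $(u,x,j)$, and because the traversal order finalizes $(u,x,j)$ before its tail nodes, each term of the sum defining $\mathsf{w}(v,y,i)$ is added exactly once; hence each accumulator holds the correct weight by the time it is finalized. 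Charging each push to the hyperedge it crosses, the total work is $\sum_{e \in E(\mathcal{H})}|T(e)| = 2|E(\mathcal{H})| = O(|V(\mathcal{H})|)$, plus the $O(k)$ base case and the $O(|V(\mathcal{H})|)$ initialization of accumulators; since we store one weight per hypernode, this yields $O(|V(\mathcal{H})|) = O(m \cdot n \cdot k)$ time and space.

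The main obstacle is purely conceptual: read literally, the recurrence sums over the set $\mathsf{D}(v,y,i)$, which could be large, so a pull-based evaluation appears to risk super-linear work. The crux of the argument is that the push reformulation visits each hyperedge exactly once, and the aggregate size $\sum_{(v,y,i)}|\mathsf{D}(v,y,i)|$ equals $2|E(\mathcal{H})|$, so the total work stays linear in $|V(\mathcal{H})|$ regardless of how the parent sets are distributed across individual nodes.
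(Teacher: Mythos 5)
Your proposal is correct, and it in fact supplies an argument where the paper gives none: this observation is stated bare, with no accompanying proof (the only analogous justification in the paper is the one-line argument for the Stage-1 runtime, which simply notes that each list is computed in $O(k)$ time and that space is bounded by time). Your counting of $|V(\mathcal{H})| = m\cdot n\cdot k + k$ matches the hypernode definition exactly, and your push-based reformulation of the recurrence $\mathsf{w}(v,y,i)=\sum_{(u,x,j)\in\mathsf{D}(v,y,i)}\mathsf{w}(u,x,j)$ is the right way to make the linear bound transparent: since each hypernode has at most one incoming hyperedge with tail size $2$, the aggregate $\sum_{(v,y,i)}|\mathsf{D}(v,y,i)|$ is at most $2|E(\mathcal{H})|\leq 2|V(\mathcal{H})|$, so charging each weight contribution to the hyperedge it crosses gives $O(|V(\mathcal{H})|)$ total work whether one pushes or pulls (with adjacency lists). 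The only cosmetic caveat is that $|E(\mathcal{H})|=|V(\mathcal{H})|$ should be an inequality $|E(\mathcal{H})|\leq|V(\mathcal{H})|$, since hypernodes associated with leaves of $G$ (and the $\infty$-cost placeholders) have no incoming hyperedge; this does not affect the bound. Your argument is a strictly more careful version of what the authors implicitly intend by ``iterating the hypergraph.''
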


\subsection{Stage 3.1: Pattern Discovery} \label{sec:pattern}


The current version of RSAM-finder allows pattern queries to be specified as follows. A pattern specification consists of a tuple $( \mathsf{EV},\mathsf{color},\mathsf{distance})$ where:
\begin{enumerate}
\item{$\mathsf{EV}\subseteq \{ \mathsf{S}, \mathsf{D},\mathsf{HT} \}$ specifies the evolutionary event of the pattern ($\mathsf{S}$ for speciation, $\mathsf{D}$ for duplication and $\mathsf{HT}$ for horizontal transfer).}

\item{$\mathsf{color}\in \{ \mathsf{red},\mathsf{black},\mathsf{None} \}$ specifies a color representing the environmental niche to which the sought RSAM confers adaptation.}
\item{$\mathsf{distance}\in \{\mathsf{True},\mathsf{False}\}$ is a boolean indicator specifying whether or not to consider edge lengths (representing evolutionary distances) in the pattern specification.} 
\end{enumerate}

For a colored query (having the second parameter in the specification set to $\mathsf{red}$ or $\mathsf{back}$), the user can provide, as part of the input, a function $\mathsf{colors}:L(X)\to \Upsilon$ where $X$ specifies whether the pattern refers to a subtree of $S$ or a subtree of $G$, and $\Upsilon=\{ \mathsf{red},\mathsf{black}\}$. Here, colors represent a binary environmental annotation of the leaves. Then, a preprocessing step is applied, in which the vertices of $S$ and $G$ are colored based on the colors assigned to the leaves of the subtree they root. We omit the technical details entailing the implementation of this preprocessing step to Section~1 of \citet{supmat2019}. 

In addition to the settings described above, the user can select one of two modes:
\begin{enumerate}
\item{{\bf Single-pattern mode.}  In this mode, the user specifies a single pattern and a threshold, and the sought RSAMs are identified as nodes $u \in I(G)$ such that $G_u$ is enriched in the pattern, and $|V(G_u)|$ is bounded from below by the specified threshold. }

\item{{\bf Dual-pattern (contrasting) mode.} In this mode, the user specifies two patterns and one threshold, and the sought RSAMs are identified as nodes $u\in I(G)$ with children $v,w\in V(G)$ such that  $G_v$  is enriched with one pattern while $G_w$ is enriched with the other pattern. Here, the subtree size bound threshold refers to $|V(G_v)|$ and $|V(G_w)|$.}
\end{enumerate}

The Pattern Identification algorithm proceeds as follows.
\begin{enumerate}
\item {For each pattern $P=(\mathsf{EV},\mathsf{color},\mathsf{distance})$ and for each hypernode $(u,x,i)\in V(\mathcal{H})$, check whether both $\mathsf{event}(u,x,i)\in \mathsf{EV}$ and the colors obey the requirements derived from the $\mathsf{color}$ field of the pattern specification (described in more details in  Section~1 of  \citet{supmat2019}). 
If so, mark $(u,x,i)$ as interesting. \label{subsec:hypergraph_identification}}

\item {Reflect the interesting nodes identified in $\mathcal{H}$ to $G$, by assigning corresponding weights to $V(G)$. Each $u \in I(G)$ is assigned a score, which is the sum of the probabilities of instances of the pattern found in $G_u$, normalized by the number of possible events in $G_u$. Additional book-keeping details regarding how this score is computed are given in Section~\ref{sec:calculate_scores}.

Based on the specified mode of the query (single pattern or dual pattern), identify the $t$ top scoring vertices $u\in I(G)$. In case of a single-pattern mode, the scores are as defined in (2). 
In case of  dual-pattern mode, let $P_1$ and $P_2$ be the patterns. For each $u\in V(G)$ with children $v,w\in V(G)$ the score of $u$ is score of $v$ for $P_1$ (as defined in (2)) plus the score of $w$ for $P_2$, and vice versa (that is, each vertex is assigned two scores).  \label{subsec:scoring}}

\end{enumerate}
%
%
%
%

\begin{observation} Time and Space Complexity: Iterating over the hypergraph takes $O(|V(\mathcal{H})|)=O(m\cdot n\cdot k)$ time and space. 
\end{observation}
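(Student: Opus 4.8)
The plan is to bound the running time and space of the two-step Pattern Identification algorithm by $O(|V(\mathcal{H})|)$ and then invoke the size of the hypergraph. First I would recall from the definition of hypernodes in Section~\ref{sec:hypergraph-def} that every hypernode has the form $(u,x,i)$ with $u\in V(G)$, $x\in V(S)$ and $i\in\{1,\ldots,k\}$, or the form $(root,i)$ with $i\in\{1,\ldots,k\}$; hence $|V(\mathcal{H})| = m\cdot n\cdot k + k = O(m\cdot n\cdot k)$, which justifies the equality $O(|V(\mathcal{H})|)=O(m\cdot n\cdot k)$ in the statement. It then remains to show that each step runs in time and space linear in $|V(\mathcal{H})|$.

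For step~\ref{subsec:hypergraph_identification}, I would argue that the test applied to a single hypernode $(u,x,i)$ costs $O(1)$: checking $\mathsf{event}(u,x,i)\in\mathsf{EV}$ is a membership test in a set of size at most $3$, and, thanks to the color preprocessing of the vertices of $G$ and $S$ described in \citet{supmat2019}, verifying the color requirement is a constant number of lookups on the precomputed vertex colors. Since the step visits each hypernode once (for each of the at most two patterns) and marking costs one extra bit per hypernode, the total time and space of this step is $O(|V(\mathcal{H})|)$.

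The main obstacle, and the step I would treat most carefully, is step~\ref{subsec:scoring}: a literal reading of ``the sum of the probabilities of instances of the pattern found in $G_u$'' suggests, for each $u\in I(G)$, summing over all interesting hypernodes rooted at a descendant of $u$, which naively costs $\Theta(|V(G_u)|\cdot n\cdot k)$ per vertex and hence $\Theta(m^2\cdot n\cdot k)$ overall. To avoid this, I would compute the scores by a single bottom-up aggregation over $G$. Concretely, for each $v\in V(G)$ I would first compute the local contribution $\mathsf{local}(v)=\sum_{x,i:\,(v,x,i)\text{ interesting}}\mathsf{w}(v,x,i)$, which costs $O(n\cdot k)$ per $v$ and $O(m\cdot n\cdot k)$ in total (using the weights $\mathsf{w}$ already available from Stage~2), together with the number of events $v$ contributes. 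A single postorder pass over $G$ then computes, for every $u$, the subtree sum $\mathsf{local}(u)+\sum_{c\text{ child of }u}(\text{subtree sum of }c)$ and the analogous subtree event count, from which the normalized score of $u$ is read off in $O(1)$; this pass is $O(m)$. For the dual-pattern mode the same aggregation is run once for each of $P_1$ and $P_2$ (a constant factor) and combined at each internal vertex from its two children in $O(1)$ per vertex, preserving the $O(m\cdot n\cdot k)$ bound.

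Finally, selecting the $t$ top-scoring vertices of $I(G)$ is a selection over $m$ numbers, which can be carried out in $O(m)$ time by a linear-time selection procedure and is therefore dominated by the preceding steps. The additional storage consists of the interesting-bit per hypernode and the $O(m)$ scores and event counts on $V(G)$, so the total space is again $O(|V(\mathcal{H})|)=O(m\cdot n\cdot k)$, completing the argument.
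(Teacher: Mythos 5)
Your proposal is correct and matches the paper's approach: the paper states this observation without an explicit proof, but the bottom-up aggregation you identify as the crucial step for avoiding a quadratic blow-up is precisely the recurrence $\mathsf{counter}_P(u)=\mathsf{counter}_P(v)+\mathsf{counter}_P(w)+\sum_{(u,x,i)\in \mathcal{I}_u}\mathsf{w}(u,x,i)$ already defined in Section~\ref{sec:calculate_scores}, and the per-hypernode $O(1)$ marking plus the count $|V(\mathcal{H})|=m\cdot n\cdot k+k$ give the claimed bound. The only cosmetic difference is that the paper normalizes by $|E(G_u)|\cdot k$ rather than a separately aggregated event count, which is likewise available in $O(1)$ per vertex after a single postorder pass.
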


\subsection{Stage 3.2: Score Computation.} \label{sec:calculate_scores}
For each defined pattern $\mathsf{P}=(\mathsf{EV},\mathsf{color},\mathsf{distance})$, let $\mathsf{counter}_P:V(G)\to \mathbb{R}^+$ be a counter, initialized by $0$. For each $u\in I(G)$ ,let $v$ and $w$ be its right and left children, respectively. Let \[\mathcal{I}_u=\{ (u,x,i)\in V(\mathcal{H}) :(u,x,i) \text{ is marked as interesting with respect to }P \}.\] That is, for each vertex $x\in V(S)$ and $i\in \{1,\dots ,k \}$ such that $(u,x,i)\in V(\mathcal{H})$ was marked as interesting in stage 1 with respect to pattern $P$,  $(u,x,i)\in \mathcal{I}_u$. Let \begin{gather*}
\mathsf{counter}_P(u) =\mathsf{counter}_P(v) + \mathsf{counter}_P(w) + \sum_{(u,x,i)\in \mathcal{I}_u}\mathsf{w}(u,x,i)
\end{gather*}
where $ \mathsf{w}(u,x,i)$ are the probabilities assigned in Section \ref{sec:probAssignment)}.
Intuitively, for each vertex $u\in V(G)$ we calculate its probability to be interesting, with respect to the patterns we defined.
In order to avoid a bias due to variation in the sizes of the subtrees rooted by the competitively estimated nodes in $G$, we normalise each value by the number of edges in the subtree rooted in the vertex times $k$, which is an upper bound on the number of possible patterns in all the solutions. That is, for each vertex $u\in V(G)$ and pattern $P$, let $\mathsf{counter}_P(u)=\frac {\mathsf{counter}_P(u)} {\mid E(G_u)\mid \cdot k}$.

\section{Experimental Results}

We implemented the algorithm described in this paper as a tool, denoted RSAM-finder, and made it publicly available via GitHub (\cite{zoller2019}).

In this section we test and exemplify the performance of RSAM-finder. The tests are based on large scale simulations, where we demonstrate the engine's tolerance to noise (Subsection \ref{sec:simulations_noise}), and measure the practical running times of the proposed hypergraph construction algorithm as a function of increasing input size (Subsection \ref{sec:running_time}). In Subsection \ref{sec:BetaLactamase} we exemplify an application of our proposed approach to the discovery and analysis of RSAMs in a Beta Lactamase gene. But first, in Subsection \ref{sec:methods}, we give the technical details regarding our simulations, tests and experiments.

\subsection{Methods and Data Bases}\label{sec:methods}
Genes in our experiment are represented by their membership in Cluster of Orthologous Genes (\citet{tatusov2000cog}).  The STRING database (\citet{szklarczyk2016string}) was used to extract the chromosomal protein sequences for the COGs of interest, annotated with their corresponding species names as well as the corresponding NCBI IDs.  
Protein sequences were subjected to multiple sequence alignment and dendogram construction via Clustal Omega (\citet{sievers2018clustal}).  The list of NCBI IDs was used as input for \textit{NCBI Taxamony Browser} which provided a (non-binary) Species tree.
Both Gene and Species trees were converted to binary trees via the Ape R package (\citet{popescu2012ape}).
Habitat labels for the species were extracted from \href{https://www.patricbrc.org}{PATRIC}, and missing tags were manually annotated by information from the GOLD database (\citet{mukherjee2016genomes}) and from literature. CD Search (\citet{marchler2004cd}) was employed to seek statistically significant discriminating domain-level mutations (i.e.~the gain or loss of a protein functional domain). 
The simulator and our algorithm  were implemented in Python, using \href{https://networkx.github.io}{NetworkX package}, DendroPy (\citet{sukumaran2010dendropy}) and ETE Toolkit (\citet{huerta2016ete}). Visualization of the trees and plots were created using Matpllotlib and Seaborn tools.

For the simulation-based experiments, we generated random binary trees. The generation of a random binary tree was done in a top down manner, using the ETE Toolkit~(\citet{huerta2016ete}). We began with a given set of vertices, based on which we created a random binary tree. The tree was duplicated and one copy was denoted $G$, while the other was denoted $S$. The function $\sigma:L(G)\to L(S)$ was implemented as the matching between each leaf in $G$ to its copy in $S$, and the function $\mathsf{color}:L(S)\to \{ \mathsf{red},\mathsf{black} \}$ was implemented as a random binary function.  To implant the pattern in the resulting random trees, we picked a random vertex $u\in V(G)$, and modified the function $\sigma:L(G)\to L(S)$ for all vertices $w\in L(G_u)$ in a way that created a Horizontal Transfer event. To this end, consider a vertex $w\in V(G_u)$. Vertex $w$ is made to represent a Horizontal Transfer event as follows. Let $x\in V(S)$ be the copy of $w$ in $S$. Let $L(S_x)$ denote the copy of $L(G_w)$ (the leaves of the subtree rooted in $w$) in the Species tree, thus the function $\sigma$ maps each leaf of $G_w$ to its copy in the leaves of $S_x$. Then, to create a Horizontal Transfer in $w$, we need to find a vertex $y\in V(S)$ such that $y$ and $x$ are incomparable, and change the mapping of the leaves of $G_w$ to the leaves of $S_y$ randomly -- that is, for each vertex $r\in G_w$ define $\sigma(r)$ to be a random vertex $z\in S_y$. This is likely to create a Horizontal Transfer in the DLT-reconciliation. Recall that in addition, we want to make those planted  Horizontal Transfer events red-to-red events. To achieve this, we check to see if the random vertices $u$ and $y$, which are the source and the target of the Horizontal Transfer, are ``mostly red", as defined in  Section~1 of \citet{supmat2019}. If they are not, we make another random choice and check the colors again. The query pattern $(\{ \mathsf{HT}\} ,\mathsf{red},\mathsf{True})$ was used in the simulation-based experiments. According to this pattern, we sought subtrees that are enriched in red-to-red Horizontal Transfer events. (For additional details, see Section \ref{sec:pattern}.)


\begin{figure}[t]
\begin{center}
\fbox{\includegraphics[scale=0.1]{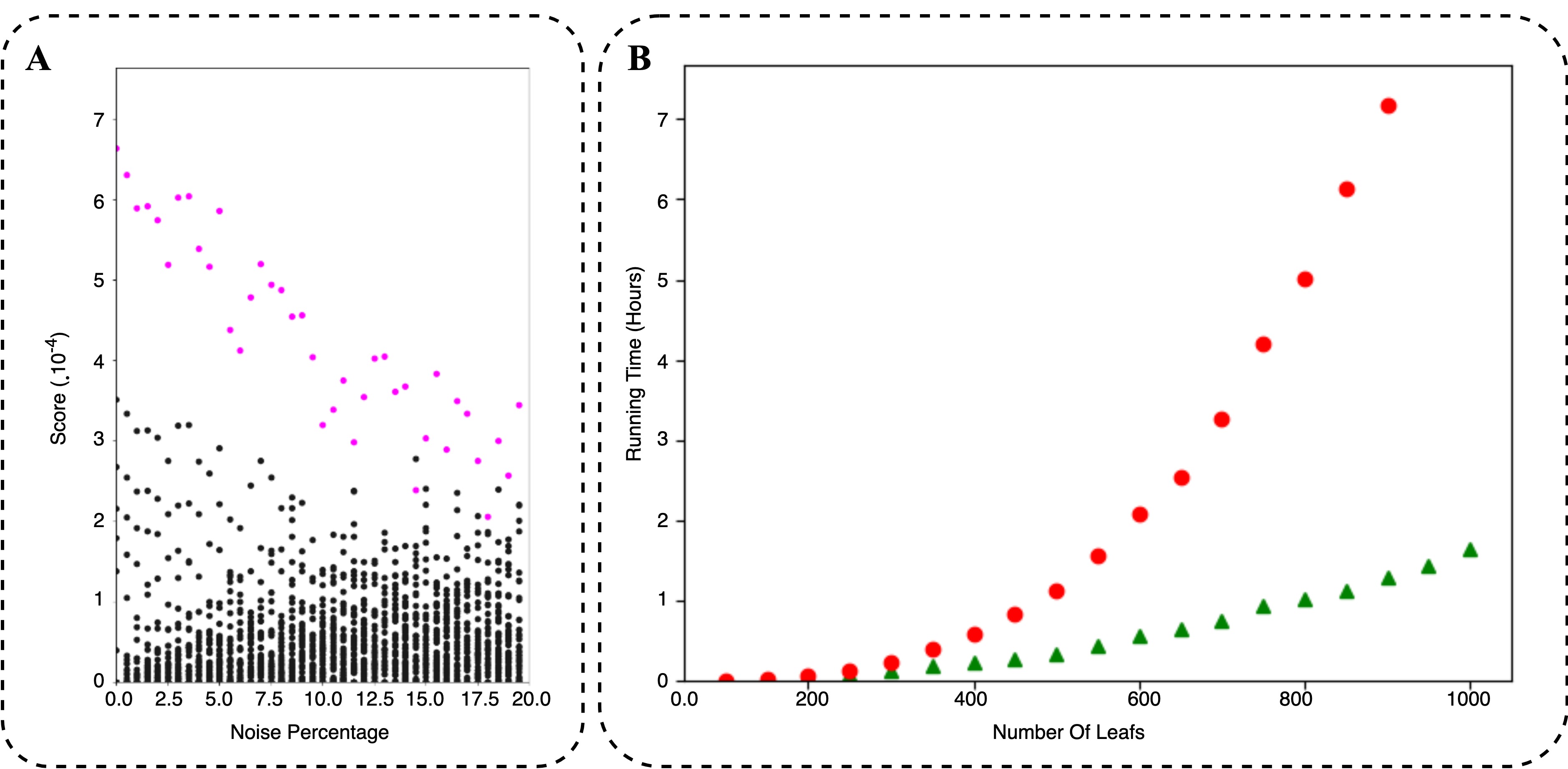}}
\end{center}
\caption{(A) The scores of the vertices in different noise levels on the input. The purple dots represent the planted vertex, and they obey the sought pattern. (B) Running times of the naive and efficient algorithms. Green triangles represents the efficient version and red circles represents the running times of the naive algorithm.}
\label{fig:simulations}
\end{figure}

\subsection{Testing for Noise Tolerance}\label{sec:simulations_noise}
We tested our tool on a random data set that was generated as described above, by introducing into the simulations an additional ``noise factor" affecting Horizontal Transfers and colors.
Each noise level represents the level of random changes in $\sigma$ and random colors of the species. In particular, a noise level of 0\% means that no changes were done to the mapping between the leaves of the Gene tree to the leaves of the Species tree except those of the planted pattern, and no change was made to the function  $\mathsf{color}:L(S)\to \{ \mathsf{red},\mathsf{black} \}$, while a noise level of 100\% means that the mappings of all of the vertices of the Gene tree were randomly picked, and all of the species colors were randomly picked again.

Fig.~\ref{fig:simulations}(A) demonstrates the advantage of our approach across different noise levels, following the strategy described above to generate randomized phylogenetic Gene and Species trees with a planted pattern. First, we constructed random phylogenetic Species and Gene trees with 600 leaves and one planted pattern (marked in purple in Fig.~\ref{fig:simulations}(A)). For each noise level between 0\% to 20\% we constructed the corresponding hypergraph. For all experiments, we used $k=100$, set the minimum size of a subtree to 0.1\% of the number of all edges, and set $c_\Delta=c_\Theta=1$. Results for each noise level were computed as an average of 50 random choices for the same noise level, on the same input trees. The scores are as defined in Section \ref{sec:pattern}. 

We found that, at the lower noise levels, the score of the planted vertex $u\in V(G)$ is higher than that of any other vertex, and this difference decreases as the noise level increases. Note that the additional noise increases the number and scores of false positives found. These findings support the claim that our method is able to find a pattern within noisy data. 

\subsection{Running Time Measurements}\label{sec:running_time}

To demonstrate the practicality of the theoretical improvements presented in Section \ref{sec:hypergraphCons}, we compared the running times of  the efficient, $O(m n k )$ time algorithm for hypergraph construction proposed in Section \ref{sec:hypergraphCons}, versus the naive, $O(m n (n+k) log(n+k))$ time algorithm mentioned in the introduction.  

The inputs to the compared algorithms were generated as follows: We picked random binary trees denoted $S$ and $G$, with number of leaves ranging from $100$ to $1000$. For each number of leaves, we randomly created 10 such pairs of trees, and ran both the naive and the efficient algorithms on both datasets. 

Fig.~\ref{fig:simulations}(B) summarizes the measured time results. The green triangles correspond to the average of the time measured for the efficient version of the algorithm, and the red circles correspond to an average of the time measured for the naive version of the algorithm. 

We found that as we increase the number of leaves, the differences in practical running times between the naive and the efficient algorithms become more significant. Furthermore, as expected in practice, the running time of the efficient algorithm is linear in the input size, while that of the naive one behaves as a non-linear function.

\subsection{Example: RSAM Discovery in a Beta Lactamase.}\label{sec:BetaLactamase}

 \begin{figure}[!htb]
\begin{center}
\fbox{\includegraphics[scale=0.1]{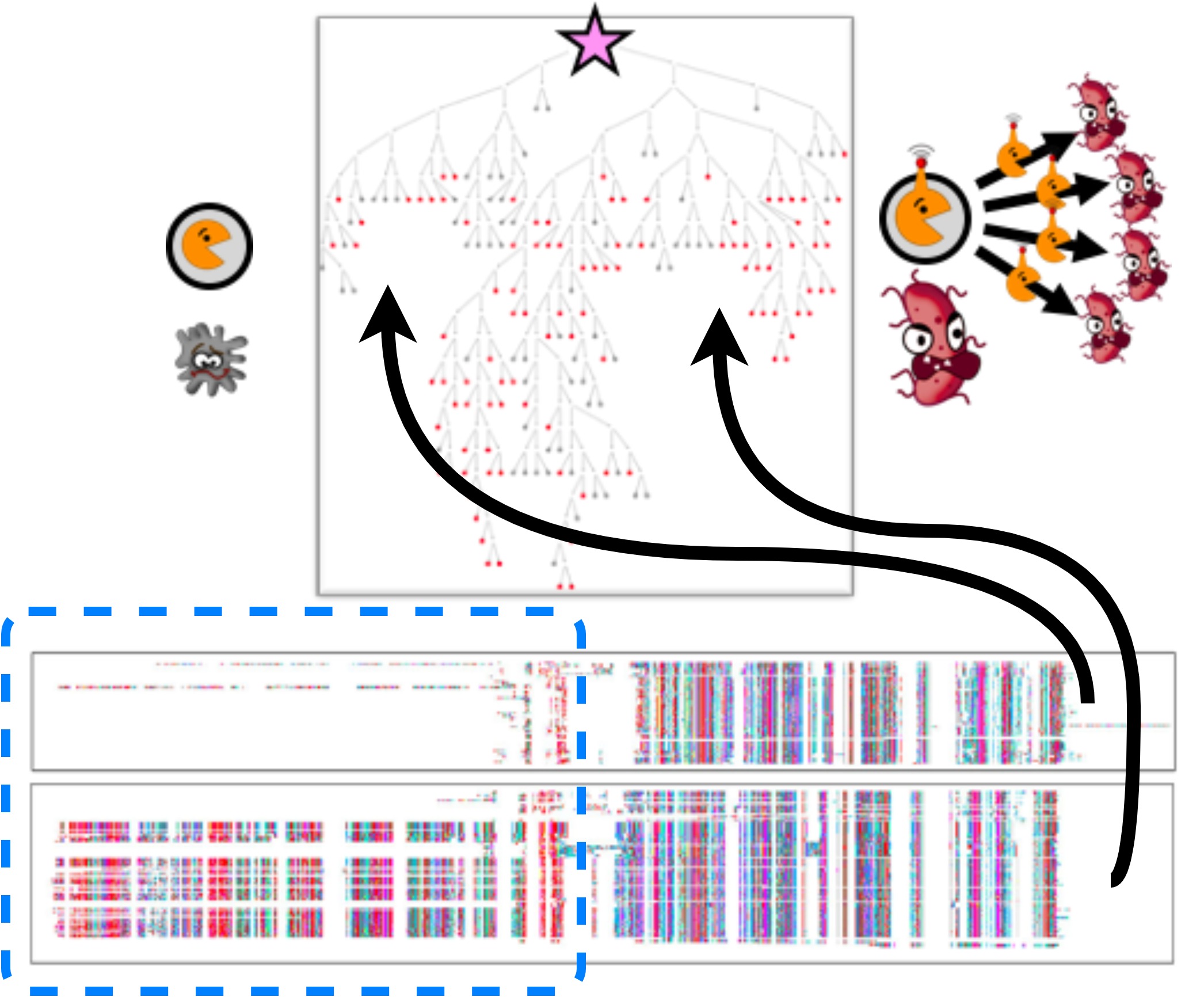}}
\end{center}

\caption{Application of RSAM-finder to genes belonging to the class D Beta Lactamase family. The sought pattern is $ \left( (\{ \mathsf{HT}\} ,\mathsf{red},\mathsf{True}),( \{\mathsf{S},\mathsf{D},\mathsf{HT}\} ,\mathsf{black},\mathsf{False})\right)$, which codes for two patterns, one of a massive Horizontal Transfer events from red to red (right subtree) and the other is all black events (left subtree). The figure shows the top-scoring subtree, and the corresponding sequences. The blue rectangle marks the mutation characterizing  the sequences in the leaves of the right subtree: this insertion was identified as a BlaR (signal transducer) domain.}
\label{fig:class_d_result}
\end{figure}
 
Beta lactamases are versatile enzymes conferring resistance to the Beta lactam antibiotics,
found in a diversity of bacterial sources. Their commonality is the ability to hydrolyze
chemical compounds containing a Beta lactam ring  (\citet{bush2018past}). The secretion of antimicrobial compounds is an ancient mechanism with clear survival benefits for microbes competing with other microorganisms. Consequently, mechanisms that confer resistance are also ancient and may represent an underestimated reservoir in environmental bacteria (\citet{bush2018past}). Antibiotic resistance factors, conferring adaptation to the pathogenesis environment, are widely spread by horizontal gene transfer mechanisms like conjugation, transformation and transduction (\citet{navarro2006acquisition,poirel2009integron}). The persistent exposure of bacterial strains to a multitude of Beta lactams has induced dynamic and continuous production and mutation of Beta lactamases in these bacteria, expanding their activity even against the newly developed Beta lactam antibiotics (\citet{stapleton2016outbreaks}).  Thus, an important objective is to identify mutations in Beta lactamase genes conferring adaptation to human and animal hosts.


Motivated by the above, we exemplify a microbiological application of RSAM-finder to the discovery of RSAMs in Beta Lactamase genes that confer adaptation to human and animal hosts. To this end, we use the pattern $ \left( (\{ \mathsf{HT}\} ,\mathsf{red},\mathsf{True}),( \{\mathsf{S},\mathsf{D},\mathsf{HT}\} ,\mathsf{black},\mathsf{False})\right)$ to the discovery of RSAMs in Beta Lactamase genes that confer adaptation to human and animal hosts. Here, colors represent a binary environmental annotation: human and animal host (219 species) were annotated ``red'', while species associated with all other habitats (324 species), such as soil, water and plant, were annotated ``black''.

Among the known classes (A-D) of Beta lactamase, class D (represented by COG2602) is considered to be the most diverse (\citet{evans2014oxa}). Thus, we selected COG2602 (622 genes in 543 genomes) as the dataset for our example.  Parameters were set as follows: $k=50$, the minimum size required per sought subtree was set to $0.1$ of the total number leaves of $G$, $c_\Delta=c_\Theta=1$ and $c_\Sigma=c_\mathsf{Loss}=0$. A figure displaying $G$, where the top-ranking RSAM node is marked with a star, is given in the supplementary materials. Also provided are the corresponding sequences, a figure displaying the corresponding $S$, and $\sigma$.

Within the top-ranking result for this query, we were interested in the subtree matching the first part of pattern (i.e.~enrichment in red-to-red HT edges). The gene set represented by the leaves of  this subtree (denoted ``identified gene set'') was found to be enriched in an additional domain, BlaR, a signal transducer membrane protein regulating Beta lactamase production (87/119 in the identified gene set  versus 118/622 in the background, p-val =  3.94e-52). The only transcriptional regulator currently known for Beta lactamase genes is the repressor protein BlaI, previously predicted to operate in a two-component regulatory system together with BlaR in Class A Beta lactamase (\citet{alksne1997expression}).  The positions adjacent to the instances of the identified gene set in the corresponding genomes were found to be  enriched in BlaI (70/119 of the identified gene set instances versus 90/622 of the background gene set instances, hypergeometric p-value = 1.11e-41). Note that this result was obtained with $c_\mathsf{Loss}$ set to 0. When repeating the experiment with $c_\mathsf{Loss} = 1$, this result is still found among the the two top ranking vertices.

In contrast to the identified gene set, the genes represented by the subtree that matches the second part of the pattern (frequent black events of all types) are not enriched in the BlaR domain (2/36), nor is there contextual enrichment in BlaI (4/36) in positions immediately adjacent to instances of these genes. Applying RSAM-finder to this data with simpler queries that take into account only enrichment in environmental coloring does not yield this result, nor does the application of RSAM-finder to this data with any part of the pattern on its own. 
 

The identified gene set for this result spans a wide range of Firmicutes, including  both pathogenic (e.g.~staphylococcus) and non-pathogenic species (e.g.~various gut microbes from the Clostridiales order). Homology between BlaR receptor proteins and the extra-cellular domain of Class D Beta-lactamases was previously observed (\citet{massidda1996borderline, brandt2017silico}), mainly in {\em gram-negative bacteria} (with focus on clinical samples). 
%
Thus, RSAM-finder identifies a putative Beta lactamase system in {\em gram positive bacteria}, consisting of a {\em COG2602-BlaR} Beta lactamase-receptor protein and its  BlaI family repressor, {\em predicted to confer adaptation to animal and human host environment.}
Further comparative sequence-level analysis (\citet{toth2016class}) may reveal the affinity of this Beta lactamase system to specific Beta lactam~drugs.

 \label{sec:bio}

\section{Conclusions}

We defined a new optimization problem in the DLT reconciliation domain. The input to this problem consists of a gene tree, constructed for a given gene orthology group, a species tree constructed for the species harboring one or more members of this gene orthology group, and a pattern representing a sought scenario in the reconciliation of the two trees. The sought pattern could imply some evolutionary process of interest, such as e.g. a gene conferring adaptation of the species to a specific environmental niche.  The goal of the problem is to compute, for any vertex in the gene tree,  a score reflecting the probability that  the genomic mutations associated with the edge leading into this vertex confer the occurrence of the sought pattern within high-scoring reconciliations of the subtree rooted by this vertex with corresponding subtrees in the species trees.

To solve this new problem, and overcome some of the noise associated with gene tree and species tree reconstruction, we proposed an algorithm that  first constructs a hypergraph $\cal H$ that stores information regarding the k-best DLT reconciliation scenarios for a given problem instance. 
The time complexity of the algorithm we propose for the construction of this hypergraph is $O(m\cdot n\cdot k)$, which is essentially {\em optimal} since the number of vertices (and hence also the size) of the hypergraph can be as large as $\Omega(m\cdot n\cdot k)$. 

Interesting open problems include the goal of extending the tool to handle more robust variations of phylogenies, such as polytomies and phylogenetic networks. It may also be helpful to consider bootstrapping methods to train the parameters and thresholds utilized by RSAM-Finder.

\section{Acknowledgments}

This work was supported by ISF grants no.~1176/18 and no.~939/18 and by the Lynn and William Frankel Center for Computer Science.

\section{Disclosure Statement}

No competing financial interests exist.

\bibliography{ms}

\end{document}